\documentclass[10pt,journal]{IEEEtran}
%\documentclass[12pt,draftcls,onecolumn]{IEEEtran}
% If the IEEEtran.cls has not been installed into the LaTeX system files,
% manually specify the path to it:
% \documentclass[conference]{../sty/IEEEtran}
\usepackage{amsfonts,amsmath,amssymb}
\usepackage{cite}
\usepackage{graphicx}
\usepackage{url}
\newtheorem{theorem}{Theorem}

  \newtheorem{remark}{Remark}
  
  \newtheorem{example}{Example}

\begin{document}
%\pagewiselinenumbers
%\baselineskip 4.6ex

\title{On Achievable Degrees of Freedom for  MIMO X Channels}
\author{Lu~Yang,~\IEEEmembership{Student~Member,~IEEE,}~and~Wei~Zhang,~\IEEEmembership{Senior~Member,~IEEE}
\thanks{This paper was presented in part at International Conference on Wireless Communications and Signal Processing, Nanjing, China, Nov. 9-11, 2011.}
\thanks{L. Yang and W. Zhang are with School of Electrical Engineering \& Telecommunications, The University of New South Wales, Sydney, Australia
(e-mail: lu.yang@student.unsw.edu.au; wzhang@ee.unsw.edu.au). }
}

\maketitle

\begin{abstract}
 %The outer-bound degree of freedom (DoF) of MIMO X channels has been known, but the maximum achievable DoF still remains unknown in general.
 In this paper, the achievable DoF of  MIMO X channels for constant channel coefficients with $M_t$ antennas at transmitter $t$ and $N_r$ antennas at receiver $r$ ($t,r=1,2$) is studied.
A spatial interference alignment and cancelation scheme is proposed to achieve the maximum DoF of the MIMO X channels. The scenario of $M_1\geq M_2\geq N_1\geq N_2$ is first considered and divided into 3 cases, $3N_2<M_1+M_2<2N_1+N_2$ (Case $A$), $M_1+M_2\geq2N_1+N_2$ (Case $B$), and $M_1+M_2\leq3N_2$ (Case $C$).
  With the proposed scheme,  it is shown that in Case $A$, the outer-bound $\frac{M_1+M_2+N_2}{2}$ is achievable; in Case $B$, the achievable DoF equals the outer-bound $N_1+N_2$ if $M_2>N_1$, otherwise it is $\frac{1}{2}$ or $1$ less than the outer-bound; in Case $C$, the achievable DoF is equal to the outer-bound $\frac{2}{3}(M_1+M_2)$ if $(3N_2-M_1-M_2)\mod 3=0$, and it is $\frac{1}{3}$ or $\frac{1}{6}$ less than the outer-bound if $(3N_2-M_1-M_2)\mod 3=1~\mathrm{or}~2$.
In the scenario of $M_t\leq N_r$, the exact symmetrical results of DoF can be obtained.
\end{abstract}

\begin{IEEEkeywords}
  MIMO, X channel, degree of freedom, interference alignment.
 \end{IEEEkeywords}

% \newpage
\section{Introduction}

In recent years there is growing interest in capacity characterization of distributed wireless networks. In the high signal-to-noise ratio (SNR) regime, Degree of Freedom (DoF) provides accurate capacity approximation and offers fundamental insights into optimal interference management schemes\cite{1}.
The DoF benefits of overlapping interference space were first studied in \cite{MMK} for $2\times2$ X network, where an iterative algorithm was proposed for optimizing the transmitters and receivers in conjunction with dirty paper coding and successive decoding. It was shown in \cite{MMK} with $M$ antennas at each node totally $\lfloor\frac{4M}{3}\rfloor$ DoF was achieved. Afterward, the concept of \textit{interference alignment} was crystalized in \cite{MIMO} by Jafar and Shamai, where a closed-form solution for a beamforming scheme that achieves perfect interference alignment was provided.
The other setting of interference alignment is $K$-user interference channel \cite{K}, which further enhances the status of interference alignment as a general principle by establishing its applications in a variety of contexts, including propagation delay, phase alignment and beamforming.

The novel idea of interference alignment has challenged much of the conventional wisdom and has been then utilized in the DoF characterization of various system models, such as the $K$-user MIMO interference channel \cite{MIMOI,ICC2}, MIMO X channel\cite{ICC,X}, compound MISO BC channel \cite {BC}, down-link channel \cite{down,downlink}, etc. Although the benefits generated by interference alignment are remarkable, they have so far been shown mostly under idealized assumptions such as global channel knowledge and the need of channel variation. Some works have been done to deal with the former issue: \cite{approach,blind} try to implement interference alignment scheme with limited channel information at transmitter; \cite{delay,delayedi,delayedx} focus on the case in which the channel information is available at transmitters but has some delays, mostly due to the channel variations. It in fact leads us to the concerns of this paper -- the utilization of interference alignment schemes for constant or slow fading channels.
%The challenge of constant channel is that the symbol extension only leads to scaled identity matrix and does not rotate the signal vectors. Hence, it does not provide the relativity of alignment needed for conventional signal vector interference alignment.
%To our best knowledge, the exact achievable DoF with generic constant channel coefficients remains unknown in general  for MIMO X channels.

In this paper, we focus on the achievable DoF of  MIMO X channels with constant complex channel coefficients. Transmitter $t$ ($t=1,2$) is equipped with $M_t$ antennas and receiver $r$ ($r=1,2$) is equipped with $N_r$ antennas, denoted by ($M_1, M_2, N_1, N_2$).
We first review the related works that have been done in this area. The DoF of constant $2\times2$ MIMO X channels was first studied in \cite{MMK2}, in which some linear filters are employed at the transmitters and receivers to decompose the system into either two noninterfering multiple-antenna broadcast sub-channels or two noninterfering multiple-antenna multiple-access sub-channels. Then, with the use of spatial interference alignment, some surprisingly high DoF was obtained. In particular, it was shown in \cite{MMK2} that for systems of ($\lceil\frac{1}{2}\lfloor\frac{4N}{3}\rfloor\rceil$, $\lfloor\frac{1}{2}\lfloor\frac{4N}{3}\rfloor\rfloor$, $N$, $N$) and ($N$, $N$, $\lceil\frac{1}{2}\lfloor\frac{4N}{3}\rfloor\rceil$, $\lfloor\frac{1}{2}\lfloor\frac{4N}{3}\rfloor\rfloor$), the DoF of $\lfloor\frac{4N}{3}\rfloor$ can be achieved.
Afterward, \textit{signal level interference alignment } \cite{Tse,deter} was proposed, in which interference alignment is achieved in signal scale and through lattice codes. The idea was then further advanced and utilized in the DoF characterization of $K$-user interference channel \cite{real} and  MIMO X channels \cite{layer}.
In particular, a layered interference alignment scheme was proposed in \cite{layer} which utilized the concept of both vector alignment and signal alignment, combined with a number-theoretic joint processing technique at receivers. With the same number of antennas on each node, the outer-bound DoF can be achieved with real channel coefficients\cite{layer}. The process is backed up by a recent result in the field of Simultaneous Diophantine Approximation \cite{number}.
Recently, an effective technique called \textit{asymmetric signaling} was introduced in \cite{asy}, whose main idea is to explore the phase dimensions of communication system with asymmetric input. With the scheme proposed in \cite{asy}, optimal DoF can be achieved for a variety of single-antenna networks.

%How to achieve optimal DoF in MIMO X channels with complex channel coefficients (i.e., Gaussian channel) for general antenna configurations still remains unknown.
In this paper, we study the   MIMO X channels with constant complex channel coefficients, where each node is equipped with different number of antennas.
We propose an asymmetric interference alignment and cancelation scheme without symbol extension that achieves the outer-bound or near outer-bound DoF for both cases  $M_t\geq N_r$ and $M_t\leq N_r$ ($t,r=1,2$).
 In the scenario of $M_1\geq M_2\geq N_1\geq N_2$, it is divided into three cases, which are $3N_2<M_1+M_2<2N_1+N_2$ (Case $A$), $M_1+M_2\geq2N_1+N_2$ (Case $B$) and $M_1+M_2\leq3N_2$ (Case $C$). In each case, a linear optimization problem is formulated to maximize DoF. By solving the problem, the maximum achievable DoF can be determined. Specifically,
  in Case $A$, the outer-bound $\frac{M_1+M_2+N_2}{2}$ is achievable; in Case $B$, the achievable DoF equals the outer-bound $N_1+N_2$ if $M_2>N_1$, otherwise it is $\frac{1}{2}$ or $1$ less than the outer-bound; in Case $C$, the achievable DoF is equal to the outer-bound $\frac{2}{3}(M_1+M_2)$ if $(3N_2-M_1-M_2)\mod 3=0$, and it is $\frac{1}{3}$ or $\frac{1}{6}$ less than the outer-bound if $(3N_2-M_1-M_2)\mod 3=1~\mathrm{or}~2$.
  Moreover, an intuitive explanation is given for each case to validate the results.
 In the scenario of $M_t\leq N_r$, we show that  exact symmetrical results of DoF can be obtained.

The paper is organized as follows. In Section \ref{s2}, some main concepts incorporated in the scheme are presented. In Section \ref{result}, the system model and main results are introduced. In Section \ref{IAC}, an asymmetric interference alignment and cancelation scheme is described. In Section \ref{A},\ref{B}, and \ref{C}, the achievable DoF of the MIMO X channels for $M_t\geq N_r$ are investigated for Case $A$, $B$, and $C$, respectively. The DoF of $M_t\leq N_r$ is addressed in Section \ref{M<N}. Finally, Section \ref{con} concludes the paper.

\section{Main Concepts}\label{s2}

\subsection{Degrees of Freedom}
The DoF of message $m$ transmitted in the system is defined as \cite{X}
\begin{eqnarray}
% \nonumber to remove numbering (before each equation)
  d_m=\lim_{\rho \to \infty}\frac{R_m(\rho)}{\log_2{\rho}}
\end{eqnarray}
where $\rho$ denotes the power constraint of the message and $R_m(\rho)$ represents the rate of the codeword encoding the message $m$. Consider a single user point-to-point channel where the transmitted constellation $\mathcal{U}(-Q,Q)_\mathbb{Z}=\{-Q,-Q+1,\ldots,-1,1,\dots,Q-1,Q\}$ ($Q$ is an integer) is used for a single message. Since it is assumed that the additive noise has unit variance and the minimum distance in the received constellation is, the same as transmitted constellation, also one, the noise can be treated as removable \cite{real}. Therefore $R_m\approx{\log{2Q}}$ is achievable for the channel. In addition, the power constraint should be no less than $Q^2$. Hence, $\rho=Q^2$, and the DoF associated with the message can be calculated as
\begin{eqnarray}
% \nonumber to remove numbering (before each equation)
  d_m=\lim_{Q \to \infty}\frac{R_m=\log{(2Q)}}{\log_2{Q^2}}=\frac{1}{2}\label{1/2}
\end{eqnarray}

If the message ($m=u+\mathbf{j}v$) is modulated with a two-dimensional constellation $\mathcal{U}=\mathcal{V}=(-Q,Q)_\mathbb{Z}=\{-Q,-Q+1,\ldots,-1,1,\dots,Q-1,Q\}$ , the rate will become $R_m=2\log{(2Q)}$. Since the power constraint is $2Q^2$, each message will carry $1$ DoF, i.e.,
\begin{eqnarray}
% \nonumber to remove numbering (before each equation)
  d_m=\lim_{Q \to \infty}\frac{R_m=2\log{(2Q)}}{\log_2{2Q^2}}=1
\end{eqnarray}

As we can see, if the message is a complex number and has both real and imaginary parts, the total DoF is the sum DoF of each part.

\subsection{Structured Coding}

In this paper, it is assumed that each message has only one dimension (real). Given that two-dimensional constellation is much more common in practical modulation schemes (such as QAM), we propose a coding scheme such that the complex message ($m=u+\mathbf{j}v$) can be transformed into a real number $s$. We let
\begin{eqnarray}
% \nonumber to remove numbering (before each equation)
  s=u+c\cdot v
\end{eqnarray}
where $c$ is an integer. Since the sum of two structured codes is still a structured code, $s$ will have the constellation of $\mathcal{U}'$.\footnote[1]{$\mathcal{U}'=(-cQ-Q,cQ+Q)_\mathbb{Z}=\{-cQ-Q,-cQ-Q+1,\dots,-cQ+Q,-c(Q-1)-Q,\dots,-c+Q,c-Q,c-Q+1,\dots,c(Q-1)+Q,cQ-Q,\dots,cQ+Q-1,cQ+Q\}$}
To guarantee each point in this constellation does not overlap with others and keep the minimum distance equal to or larger than one, $c$ must satisfy $c\geq2Q+1$. By doing this, there would be a one-to-one mapping from the real number $s$ to the original message $m$. For example, if the message $m$ is modulated with QPSK, then $Q=1$, and $m$ must be one of the following four points $\{-1-\mathbf{j}, 1-\mathbf{j}, -1+\mathbf{j}, 1+\mathbf{j}\}$. If we let $c=2Q+1=3$, the constellation of $s$ would be $\{-4,-2,2,4\}$.

Therefore, the assumption of messages being real does not lose its generality. The price we pay here is that the power constraint is no longer $Q^2$, but $(cQ)^2+Q^2$. Since $c=2Q+1$, the DoF of $s$ is calculated as
\begin{eqnarray}
% \nonumber to remove numbering (before each equation)
  d_s=\lim_{Q \to \infty}\frac{2\log{(2Q)}}{\log_2{((cQ)^2+Q^2)}}=\frac{1}{2}
\end{eqnarray}

\subsection{Asymmetric Signaling}
\setcounter{equation}{8}
\begin{figure*}
\begin{eqnarray}
          % \nonumber to remove numbering (before each equation)
            \bar{H}=\left[\begin{array}{cccc}
                               |h^{11}|\cos{\varphi^{11}} &-|h^{11}|\sin{\varphi^{11}} & |h^{12}|\cos{\varphi^{12}} &-|h^{12}|\sin{\varphi^{12}} \\
                               |h^{11}|\sin{\varphi^{11}} & |h^{11}|\cos{\varphi^{11}} & |h^{12}|\sin{\varphi^{12}} & ~|h^{12}|\cos{\varphi^{12}} \\
                               |h^{21}|\cos{\varphi^{21}} &-|h^{21}|\sin{\varphi^{21}} & |h^{22}|\cos{\varphi^{22}} &-|h^{22}|\sin{\varphi^{22}} \\
                               |h^{21}|\sin{\varphi^{21}} & |h^{21}|\cos{\varphi^{21}} & |h^{22}|\sin{\varphi^{22}} & |h^{22}|\cos{\varphi^{22}}
                             \end{array}\right]\label{H}
          \end{eqnarray}
          \hrulefill
\end{figure*}
\setcounter{equation}{5}
In wireless communication, we normally come across symmetric complex Gaussian variables such as additive noise, fading channels, and so are the input signals, whose real and imaginary parts are independent of each other. Inspired by\cite{asy}, we use asymmetric input in our scheme, in which the input signals are chosen to be complex but not symmetric. By doing so,  an $M$-dimensional complex system can be transformed into a $2M$-dimensional real system.

For instance, we consider a MIMO point-to-point channel with two antennas at each side. Let $\mathbf{x}\in \mathbb{C}^{2\times1}$ denote the transmitted signal and $\mathbf{y}\in \mathbb{C}^{2\times1}$ denote the received signal. We have
\begin{eqnarray}
% \nonumber to remove numbering (before each equation)
  \mathbf{y}&=&\left[\begin{array}{c}
               y_1 \\
               y_2
             \end{array}\right]
   = \left[\begin{array}{cc}
                    h^{11} & h^{12} \\
                    h^{21} & h^{22}
                  \end{array}\right] \underbrace{\mathbf{v}\cdot m}_{\mathbf{x}} \label{complex}
\end{eqnarray}
where $\mathbf{v}=\left[\begin{array}{cc}
                    v_1 & v_2
                  \end{array}
\right]^\mathbf{T}$ denotes the precoding vector, $m$ is the original real message, and $h^{ij}$ denotes the channel gain from the $j$th transmit antenna to the $i$th receive antenna with phase $\varphi^{ij}$, which can be written as
\begin{eqnarray}
% \nonumber to remove numbering (before each equation)
  h^{ij} &=& |h^{ij}|(\cos{\varphi^{ij}}+\mathbf{j}\sin{\varphi^{ij}}).
\end{eqnarray}

Therefore, (\ref{complex}) can be expressed alternatively as a real system, i.e.,
\begin{eqnarray}
% \nonumber to remove numbering (before each equation)
  \bar{Y} =\left[\begin{array}{c}
             \mathrm{Re}(y_1) \\
             \mathrm{Im}(y_1) \\
             \mathrm{Re}(y_2) \\
             \mathrm{Im}(y_2)
           \end{array}\right]=\bar{H}\underbrace{\left[\begin{array}{c}
             \mathrm{Re}(v_1) \\
             \mathrm{Im}(v_1) \\
             \mathrm{Re}(v_2) \\
             \mathrm{Im}(v_2)
           \end{array}\right]}_{\bar{V}}m \label{real}
\end{eqnarray}
where $\mathrm{Re}(v)$ and $\mathrm{Im}(v)$ denote real and imaginary parts of $v$, respectively, and the equivalent channel matrix $\bar{H}$ is expressed as (\ref{H}).
%\begin{figure*}
%\begin{eqnarray}
%          % \nonumber to remove numbering (before each equation)
%            \bar{H}=\left[\begin{array}{cccc}
%                               |h^{11}|\cos{\varphi^{11}} &-|h^{11}|\sin{\varphi^{11}} & |h^{12}|\cos{\varphi^{12}} &-|h^{12}|\sin{\varphi^{12}} \\
%                               |h^{11}|\sin{\varphi^{11}} & |h^{11}|\cos{\varphi^{11}} & |h^{12}|\sin{\varphi^{12}} & ~|h^{12}|\cos{\varphi^{12}} \\
%                               |h^{21}|\cos{\varphi^{21}} &-|h^{21}|\sin{\varphi^{21}} & |h^{22}|\cos{\varphi^{22}} &-|h^{22}|\sin{\varphi^{22}} \\
%                               |h^{21}|\sin{\varphi^{21}} & |h^{21}|\cos{\varphi^{21}} & |h^{22}|\sin{\varphi^{22}} & |h^{22}|\cos{\varphi^{22}}
%                             \end{array}\right]\label{H}
%          \end{eqnarray}
%\end{figure*}

It can be seen that the $2\times2$ complex system is turned into a $4\times4$ real system.
\setcounter{equation}{9}

\section{System Model and Main Result}\label{result}

\subsection {System Model}

We consider a $2\times2$ MIMO X network as depicted in Fig. \ref{fig1}. Transmitter $T_t$ ($t=1,2$) is equipped with $M_t$ antennas and receiver $R_r$ ($r=1,2$) is equipped with $N_r$ antennas. This configuration of antennas is denoted by ($M_1, M_2, N_1, N_2$). Without loss of generality, we assume that $M_1\geq M_2$ and $N_1\geq N_2$.

\begin{figure}[t!]
  \begin{center}
%\vskip 10mm \hskip 5mm
        %\resizebox{4cm}{2.8cm}{\includegraphics{22X.eps}}
        \includegraphics[width=0.4\columnwidth]{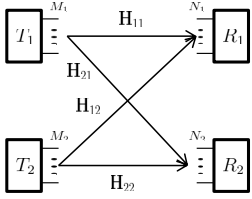}
        \caption{$2\times2$ MIMO $X$ channel $(M_1,~M_2,~N_1,~N_2)$}
        \label{fig1}
    \end{center}
\end{figure}

Let $h_{rt}^{ij}$ denote the channel gain from the $j$th antenna of transmitter $t$ to the $i$th antenna of receiver $r$. It can be expressed as
\begin{eqnarray}
% \nonumber to remove numbering (before each equation)
  h_{rt}^{ij} &=& |h_{rt}^{ij}|(\cos{\varphi_{rt}^{ij}}+\mathbf{j}\sin{\varphi_{rt}^{ij}})
\end{eqnarray}
where $\varphi_{rt}^{ij}$ denotes the phase of $h_{rt}^{ij}$.

With asymmetric signaling, we can let $\mathbf{H}_{rt}$ denote the channel matrix between transmitter $t$ and receiver $r$ and let $\bar{H}_{rt}$ denote its alternative form with real quantities. All the channel matrices are sampled from continuous complex Gaussian distributions and each entry of $\mathbf{H}_{rt}$ is independent and identically distributed (i.i.d.).  The global channel information is assumed to be available at all nodes.

Let $\mathbf{m}_{rt}$ denote the message vector intended for receiver $r$ from transmitter $t$. With the proposed structured coding method, all elements of $\mathbf{m}_{rt}$ (the original messages $m_{rt}$) are set to be real, and each carries $\frac{1}{2}$ DoF according to (\ref{1/2}).

\subsection {Main Results}

The outer-bound DoF of the MIMO X channels was derived in \cite[Eq. (26)]{MIMO}, whose forms are different according to various settings of antenna number on each node. In this paper, we propose a signal transmission scheme that approaches the outer-bound or near outer-bound for both cases $M_t\geq N_r$ and $M_t\leq N_r$ ($t,r=1,2$).

\begin{table*}[t!]
\begin{center}\caption{Achievable DoF of MIMO X Channels ($M_1\geq M_2\geq N_1\geq N_2$)}
\renewcommand{\arraystretch}{2 }
\centering\label{table1}
\begin{tabular} {c |c |c |c }
\hline \hline
 Case  & Antenna configuration & Outer-bound DoF & Achievable DoF \\
\hline
$A $ & $3N_2<M_1+M_2<2N_1+N_2$ & $\frac{M_1+M_2+N_2}{2}$ & $\frac{M_1+M_2+N_2}{2}$ \\
\hline
$B$ & $M_1+M_2\geq2N_1+N_2$ & $N_1+N_2$ & $[N_1+N_2-1,N_1+N_2]$\\
\hline
$C$ &$ M_1+M_2\leq3N_2 $ & $ \frac{2}{3}(M_1+M_2) $ & $[\frac{2}{3}(M_1+M_2)-\frac{1}{3},\frac{2}{3}(M_1+M_2)]$ \\
\hline\hline
\end{tabular}
\end{center}
%\hrulefill
\end{table*}

\emph{Result 1}: When the number of transmitter antennas is larger than or equal to that of receiver antennas ($M_t\geq N_r$), it can be divided into three cases (as shown in Table \ref{table1}). An asymmetric interference alignment and cancelation scheme is proposed in Section \ref{IAC} that achieves the outer-bound or near outer-bound of MIMO X channels. Specifically, for $3N_2<M_1+M_2<2N_1+N_2$ (Case $A$), the exact outer-bound can be achieved.
For $M_1+M_2\geq2N_1+N_2$ (Case $B$), the outer-bound can be achieved for $M_2>N_1$. If $M_2=N_1$, to maintain the structure of the network as an X channel (not a broadcast or Z channel), the achievable DoF is $\frac{1}{2}$ or $1$ less than the outer-bound.
For $ M_1+M_2\leq3N_2$ (Case $C$), the achievable DoF is equal to the outer-bound if $(3N_2-M_1-M_2)\mod 3=0$, and it is $\frac{1}{3}$ or $\frac{1}{6}$ less than the outer-bound if $(3N_2-M_1-M_2)\mod 3=1~\mathrm{or}~2$. The achievable DoF of Cases $A$, $B$, and $C$ are proved in Section \ref{A},\ref{B}, and \ref{C}, respectively.

\emph{Result 2}:  When the number of transmitter antennas is smaller than or equal to that of receiver antennas ($M_t\leq N_r$), it can also be divided into three cases (as shown in Table \ref{table5}).  We propose an interference alignment-based precoding scheme to achieve the outer-bound or near outer-bound of MIMO X channels. It can be seen, the achievable DoF in this scenario is exactly symmetrical to \emph{Result 1}. The scheme and the proof of the results are given in Section \ref{M<N}.

\begin{table*}[t!]
\begin{center}\caption{Achievable DoF of MIMO X Channels ($N_1\geq N_2\geq M_1\geq M_2$)}
\renewcommand{\arraystretch}{2 }
\centering\label{table5}
\begin{tabular} {c |c |c |c}
\hline \hline
 Case  & Antenna configuration & Outer-bound DoF & Achievable DoF\\
\hline
$A' $ & $3M_2<N_1+N_2<2M_1+M_2$ & $\frac{N_1+N_2+M_2}{2}$ & $\frac{N_1+N_2+M_2}{2}$\\
\hline
$B'$ & $N_1+N_2\geq2M_1+M_2$ & $M_1+M_2$ & $[M_1+M_2-1,~M_1+M_2]$\\
\hline
$C'$ &$ N_1+N_2\leq3M_2 $ & $\frac{2}{3}(N_1+N_2)$ & $[\frac{2}{3}(N_1+N_2)-\frac{1}{3},~\frac{2}{3}(N_1+N_2)]$\\
\hline\hline
\end{tabular}
\end{center}
\hrulefill
\end{table*}

\section{Asymmetric Interference Alignment and Cancelation Scheme}\label{IAC}

In this section, we first elaborate the designs of transmitted signals and their precoding vectors in the scenario of $M_t\geq N_r$. Then, we show that the signals at each receiver are independent of each other.

\subsection{Design of Transmitted Signals}

%Now, we introduce the design of transmitted signals at each transmitter when the number of transmitter antennas are more than that of receiver antennas ($M_t\geq N_r,~t,r=1,2$).

$\bullet$ \textbf{Transmitted signal at $T_1$}

There are two message vectors $\mathbf{m}_{11}$ and $\mathbf{m}_{21}$ at $T_1$, which are desired signals of $R_1$ and $R_2$, respectively.

For $\mathbf{m}_{11}$, it has three blocks  $\mathbf{m}_{11}^1$, $\mathbf{m}_{11}^2$ and $\mathbf{m}_{11}^3$, each having length $L_1$, $L_2$ and $L_3$, respectively, i.e.,
\begin{eqnarray}
% \nonumber to remove numbering (before each equation)
  \mathbf{m}_{11}=\left[\begin{array}{ccc}
                    \underbrace{(\mathbf{m}_{11}^1)^\mathbf{T}}_{L_1} & \underbrace{(\mathbf{m}_{11}^2)^\mathbf{T}}_{L_2} & \underbrace{(\mathbf{m}_{11}^3)^\mathbf{T}}_{L_3}
                  \end{array}\right]^\mathbf{T} \label{m11}
\end{eqnarray}

Let $Q_{rt}$ denote the length of $\mathbf{m}_{rt}$ ($\mathbf{m}_{rt} \in \mathbb{R}^{Q_{rt}\times1}$), we have
\begin{equation}\label{q11}
    L_1+L_2+L_3=Q_{11}.
\end{equation}

Further,  $\mathbf{m}_{11}^1$ is precoded with $[\mathbf{v}_{11}^1\cdots \mathbf{v}_{11}^{L_1}]\in\mathbb{C}^{M_1\times L_1}$, $\mathbf{m}_{11}^2$ is precoded with $[\mathbf{w}_{11}^1\cdots \mathbf{w}_{11}^{L_2}]\in\mathbb{C}^{M_1\times L_2}$, and $\mathbf{m}_{11}^3$ is precoded with $[\mathbf{u}_{11}^1\cdots \mathbf{u}_{11}^{L_3}]\in\mathbb{C}^{M_1\times L_3}$. Then, the transmitted signal intended for $R_1$ from $T_1$ can be expressed as
\begin{eqnarray}
% \nonumber to remove numbering (before each equation)
  \mathbf{x}_{11} =\underbrace{\left[\begin{array}{ccc}
                     \mathbf{v}_{11}^1& \cdots & \mathbf{v}_{11}^{L_1}
                  \end{array}\right]\mathbf{m}_{11}^1}_{\mathbf{x}_{11}^1}
                  +\underbrace{\left[\begin{array}{ccc}
                     \mathbf{w}_{11}^1& \cdots & \mathbf{w}_{11}^{L_2}
                  \end{array}\right]\mathbf{m}_{11}^2}_{\mathbf{x}_{11}^2}\nonumber\\
                  +\underbrace{\left[\begin{array}{ccc}
                     \mathbf{u}_{11}^1& \cdots & \mathbf{u}_{11}^{L_3}
                  \end{array}\right]\mathbf{m}_{11}^3}_{\mathbf{x}_{11}^3}~~~~~~~~~~~~~~~~~~~~~~~~~~~~~~~~~~\nonumber
\end{eqnarray}

Similarly, we divide $\mathbf{m}_{21}$ into three blocks $\mathbf{m}_{21}^1$, $\mathbf{m}_{21}^2$, and
$\mathbf{m}_{21}^3$, each having length $K_1$, $K_2$ and $K_3$, respectively, i.e.,
\begin{eqnarray}
% \nonumber to remove numbering (before each equation)
  \mathbf{m}_{21}=\left[\begin{array}{ccc}
                    \underbrace{(\mathbf{m}_{21}^1)^\mathbf{T}}_{K_1} & \underbrace{(\mathbf{m}_{21}^2)^\mathbf{T}}_{K_2} & \underbrace{(\mathbf{m}_{21}^3)^\mathbf{T}}_{K_3}
                  \end{array}\right]^\mathbf{T} \label{m21}
\end{eqnarray}
and
\begin{equation}\label{q21}
    K_1+K_2+K_3=Q_{21}.
\end{equation}

Furthermore,  $\mathbf{m}_{21}^1$ is precoded with $[\mathbf{v}_{21}^1\cdots \mathbf{v}_{21}^{K_1}]$, $\mathbf{m}_{21}^2$ is precoded with $[\mathbf{w}_{21}^1\cdots \mathbf{w}_{21}^{K_2}]$, and $\mathbf{m}_{21}^3$ is precoded with $[\mathbf{u}_{21}^1\cdots \mathbf{u}_{21}^{K_3}]$, respectively. Then, the transmitted signal intended to $R_2$ from $T_1$ can be written as
\begin{eqnarray}
% \nonumber to remove numbering (before each equation)
  \mathbf{x}_{21}=\underbrace{\left[\begin{array}{ccc}
                     \mathbf{v}_{21}^1& \cdots & \mathbf{v}_{21}^{K_1}
                  \end{array}\right] \mathbf{m}_{21}^1}_{\mathbf{x}_{21}^1}
                  +\underbrace{\left[\begin{array}{ccc}
                     \mathbf{w}_{21}^1& \cdots & \mathbf{w}_{21}^{K_2}
                  \end{array}\right] \mathbf{m}_{21}^2}_{\mathbf{x}_{21}^2}\nonumber\\
                  +\underbrace{\left[\begin{array}{ccc}
                     \mathbf{u}_{21}^1& \cdots & \mathbf{u}_{21}^{K_3}
                  \end{array}\right] \mathbf{m}_{21}^3}_{\mathbf{x}_{21}^3}~~~~~~~~~~~~~~~~~~~~~~~~~~~~~~~~~~\nonumber
\end{eqnarray}

$\bullet$ \textbf{Transmitted signal at $T_2$}

At $T_2$, two message vectors $\mathbf{m}_{12}$ and $\mathbf{m}_{22}$ will be sent, which are the desired signals of $R_1$ and $R_2$, respectively.

For $\mathbf{m}_{12}$, it is also divided into three blocks $\mathbf{m}_{12}^1$, $\mathbf{m}_{12}^2$ and $\mathbf{m}_{12}^3$, each having length $J_1$, $J_2$ and $J_3$ respectively, i.e.,
\begin{eqnarray}
% \nonumber to remove numbering (before each equation)
  \mathbf{m}_{12}=\left[\begin{array}{ccc}
                    \underbrace{(\mathbf{m}_{12}^1)^\mathbf{T}}_{J_1} & \underbrace{(\mathbf{m}_{12}^2)^\mathbf{T}}_{J_2} & \underbrace{(\mathbf{m}_{12}^3)^\mathbf{T}}_{J_3}
                  \end{array}\right]^\mathbf{T} \label{m12}
\end{eqnarray}
and
\begin{equation}\label{q12}
    J_1+J_2+J_3=Q_{12}
\end{equation}

We let $\mathbf{m}_{12}^1$ be precoded with $[\mathbf{v}_{12}^1\cdots \mathbf{v}_{12}^{J_1}]$; $\mathbf{m}_{12}^2$ is precoded with $[\mathbf{w}_{12}^1\cdots \mathbf{w}_{12}^{J_2}]$; and $\mathbf{m}_{12}^3$ is precoded with $[\mathbf{u}_{12}^1\cdots \mathbf{u}_{12}^{J_3}]$. Then, the transmitted signal from $T_2$ intended to $R_1$ can be expressed as
\begin{eqnarray}\label{x12}
% \nonumber to remove numbering (before each equation)
  \mathbf{x}_{12} =\underbrace{\left[\begin{array}{ccc}
                     \mathbf{v}_{12}^1& \cdots & \mathbf{v}_{12}^{J_1}
                  \end{array}\right]\mathbf{m}_{12}^1}_{\mathbf{x}_{12}^1}
                  +\underbrace{\left[\begin{array}{ccc}
                     \mathbf{w}_{12}^1& \cdots & \mathbf{w}_{12}^{J_2}
                  \end{array}\right]\mathbf{m}_{12}^2}_{\mathbf{x}_{12}^2}\nonumber\\
                  +\underbrace{\left[\begin{array}{ccc}
                     \mathbf{u}_{12}^1& \cdots & \mathbf{u}_{12}^{J_3}
                  \end{array}\right]\mathbf{m}_{12}^3}_{\mathbf{x}_{12}^3}~~~~~~~~~~~~~~~~~~~~~~~~~~~~~~~~~\nonumber
\end{eqnarray}

For $\mathbf{m}_{22}$, we divide the message vector into three blocks $\mathbf{m}_{22}^1$, $\mathbf{m}_{22}^2$ and $\mathbf{m}_{22}^3$, each having length $G_1$, $G_2$ and $G_3$, respectively, i.e.,
\begin{eqnarray}
% \nonumber to remove numbering (before each equation)
  \mathbf{m}_{22}=\left[\begin{array}{ccc}
                    \underbrace{(\mathbf{m}_{22}^1)^\mathbf{T}}_{G_1} & \underbrace{(\mathbf{m}_{22}^2)^\mathbf{T}}_{G_2} & \underbrace{(\mathbf{m}_{22}^3)^\mathbf{T}}_{G_3}
                  \end{array}\right]^\mathbf{T} \label{m22}
\end{eqnarray}
and
\begin{equation}\label{q22}
    G_1+G_2+G_3=Q_{22}
\end{equation}

We let $\mathbf{m}_{22}^1$ be precoded with $[\mathbf{v}_{22}^1\cdots \mathbf{v}_{22}^{G_1}]$; $\mathbf{m}_{22}^2$ is precoded with $[\mathbf{w}_{22}^1\cdots \mathbf{w}_{22}^{G_2}]$, and $\mathbf{m}_{22}^3$ is precoded with $[\mathbf{u}_{22}^1\cdots \mathbf{u}_{22}^{G_3}]$, respectively.
Then, the transmitted signal intended to $R_2$ from $T_2$ can be expressed as
\begin{eqnarray}
% \nonumber to remove numbering (before each equation)
  \mathbf{x}_{22} =\underbrace{\left[\begin{array}{ccc}
                     \mathbf{v}_{22}^1& \cdots & \mathbf{v}_{22}^{G_1}
                  \end{array}\right] \mathbf{m}_{22}^1}_{\mathbf{x}_{22}^1}
                  +\underbrace{\left[\begin{array}{ccc}
                     \mathbf{w}_{22}^1& \cdots & \mathbf{w}_{22}^{G_2}
                  \end{array}\right] \mathbf{m}_{22}^2}_{\mathbf{x}_{22}^2}\nonumber\\
                  +\underbrace{\left[\begin{array}{ccc}
                     \mathbf{u}_{22}^1& \cdots & \mathbf{u}_{22}^{G_3}
                  \end{array}\right] \mathbf{m}_{22}^3}_{\mathbf{x}_{22}^3}~~~~~~~~~~~~~~~~~~~~~~~~~~~~~~~~~~\nonumber
\end{eqnarray}

If all desired signals are independent of each other at each receiver, the total DoF of the system can be calculated as
\begin{eqnarray}
% \nonumber to remove numbering (before each equation)
  D_{sum}=\frac{Q_{11}+Q_{21}+Q_{12}+Q_{22}}{2} \label{Dsum}
\end{eqnarray}

\subsection{Design of Precoding Vectors}

We first examine the received signals at $R_1$. It can be expressed as
\begin{eqnarray}
% \nonumber to remove numbering (before each equation)
   \mathbf{Y}_1&=& \mathbf{H}_{11}(\mathbf{x}_{11}^1+\mathbf{x}_{11}^2+\mathbf{x}_{11}^3)+\mathbf{H}_{12}(\mathbf{x}_{12}^1+\mathbf{x}_{12}^2+\mathbf{x}_{12}^3)\nonumber\\
               &+&\underbrace{\mathbf{H}_{11}(\mathbf{x}_{21}^1+\mathbf{x}_{21}^2+\mathbf{x}_{21}^3)+\mathbf{H}_{12}(\mathbf{x}_{22}^1+\mathbf{x}_{22}^2+\mathbf{x}_{22}^3)}_{\rm{interferece}}+\mathbf{z}_1\nonumber\\\label{eqn:R1}
 \end{eqnarray}
where $\mathbf{z}_r$ denotes the white noise vector at receiver $r$. Each entry of $\mathbf{z}_r$ is i.i.d. with $\mathcal{CN}(0.~1)$.

 It can be seen that $\mathbf{x}_{21}^1$, $\mathbf{x}_{21}^2$, $\mathbf{x}_{21}^3$, and $\mathbf{x}_{22}^1$, $\mathbf{x}_{22}^2$, $\mathbf{x}_{22}^3$ are the desired signals for $R_2$, but also the interference for $R_1$.

To null out the interference $\mathbf{x}_{21}^1$, $\mathbf{x}_{21}^2$ and $\mathbf{x}_{22}^1$, $\mathbf{x}_{22}^2$ at $R_1$,  we can let
\begin{eqnarray}
  \mathbf{H}_{11}\left[\begin{array}{ccc}
                     \mathbf{v}_{21}^1& \cdots & \mathbf{v}_{21}^{K_1}
                  \end{array}\right]=\mathbf{0}\nonumber \\
                  \mathbf{H}_{11}\left[\begin{array}{ccc}
                     \mathbf{w}_{21}^1& \cdots & \mathbf{w}_{21}^{K_2}
                  \end{array}\right]=\mathbf{0}
\end{eqnarray}
and
\begin{eqnarray}
  \mathbf{H}_{12}\left[\begin{array}{ccc}
                     \mathbf{v}_{22}^1& \cdots & \mathbf{v}_{22}^{G_1}
                  \end{array}\right]=\mathbf{0}\nonumber \\
                  \mathbf{H}_{12}\left[\begin{array}{ccc}
                     \mathbf{w}_{22}^1& \cdots & \mathbf{w}_{22}^{G_2}
                  \end{array}\right]=\mathbf{0}
\end{eqnarray}

These can be achieved by letting
\begin{eqnarray}
% \nonumber to remove numbering (before each equation)
  \mathbf{v}_{21}^1, \cdots, \mathbf{v}_{21}^{K_1}&\subset& \mathrm{span}\{\mathbf{P}_{11}\}\label{v21}\\
  \mathbf{w}_{21}^k&=&\mathbf{j}\cdot\mathbf{v}_{21}^k, ~~~~k=1, 2, \cdots, K_2.\label{w21}\\
  \mathbf{v}_{22}^1, \cdots, \mathbf{v}_{22}^{G_1}&\subset& \mathrm{span}\{\mathbf{P}_{12}\}\label{v22}\\
  \mathbf{w}_{22}^g&=&\mathbf{j}\cdot\mathbf{v}_{22}^g, ~~~~g=1, 2, \cdots, G_2\label{w22}
\end{eqnarray}
where $\mathbf{P}_{rt}$ denotes the null space of $\mathbf{H}_{rt}$.

For each channel matrix $\mathbf{H}_{rt}$, there are $M_t-N_r$ independent column vectors in its null space $\mathbf{P}_{rt}$. In order to satisfy (\ref{v21}) to (\ref{w22}), we can set
\begin{eqnarray}
% \nonumber to remove numbering (before each equation)
 K_2&\leq& K_1 \leq M_1-N_1\label{K1}\\
 G_2&\leq& G_1 \leq M_2-N_1\label{G1}
\end{eqnarray}

In addition, we want each signal of $\mathbf{x}_{22}^3$ to be aligned with one signal of $\mathbf{x}_{21}^3$ at $R_1$ in real space.  This can be done by letting
\begin{eqnarray}
% \nonumber to remove numbering (before each equation)
  \bar{H}_{12}\bar{U}_{22}^{k_3}&\subseteq& \mathrm{span}\{\bar{H}_{11}\bar{U}_{21}^{k_3}\} ~~k_3=1, 2 \cdots K_3\label{G3} \\
  K_3&=&G_3\label{K3}\\
  \mathbf{u}_{21}^1, \cdots, \mathbf{u}_{21}^{K_3}&\nsubseteq& \mathrm{span}\{\mathbf{P}_{11}\}\label{u21}
 \end{eqnarray}

Since $2N_1\leq 2M_2$, $\bar{U}_{22}^{k_3}$ can always be found to achieve (\ref{G3}). Note that if two signals are aligned in real space, they are also aligned in complex space (it does not hold otherwise). Then, (\ref{u21}) is to guarantee that $\mathbf{x}_{21}^3$ is independent of $\mathbf{x}_{21}^1$ and $\mathbf{x}_{21}^2$.

Now, the precoding vectors of the signals intended to $R_2$ can be determined accordingly. Specifically, we pick $K_1$ independent vectors from the null space of $\mathbf{H}_{11}$ as precoders $\left[
  \begin{array}{ccc}
    \mathbf{v}_{21}^1 & \cdots & \mathbf{v}_{21}^{K_1} \\
  \end{array}
\right]$. Then, the precoders  $\left[
  \begin{array}{ccc}
    \mathbf{w}_{21}^1 & \cdots & \mathbf{w}_{21}^{K_2} \\
  \end{array}
\right]$ can be determined according to (\ref{w21}). The precoders $\left[
  \begin{array}{ccc}
    \mathbf{v}_{22}^1 & \cdots & \mathbf{v}_{22}^{G_1} \\
  \end{array}
\right]$ and $\left[
  \begin{array}{ccc}
    \mathbf{w}_{22}^1 & \cdots & \mathbf{w}_{22}^{G_2} \\
  \end{array}
\right]$ can be chosen based on (\ref{v22}) and (\ref{w22}), respectively.
Further, we choose $K_3$ independent vectors that satisfy (\ref{u21}) as the precoders $\left[
  \begin{array}{ccc}
    \mathbf{u}_{21}^1 & \cdots & \mathbf{u}_{21}^{K_3} \\
  \end{array}
\right]$, which means $K_3$ must be no larger than the rank of $\mathbf{H}_{11}$, i.e.,
\begin{eqnarray}
% \nonumber to remove numbering (before each equation)
  K_3 \leq N_1 \label{K3N1}
\end{eqnarray}
Finally, the precoders $\left[
  \begin{array}{ccc}
    \mathbf{u}_{22}^1 & \cdots & \mathbf{u}_{22}^{G_3} \\
  \end{array}
\right]$ can be determined based on (\ref{G3}).

The received signal at $R_2$ can be expressed as
\begin{eqnarray}
% \nonumber to remove numbering (before each equation)
   \mathbf{Y}_2&=& \nonumber\mathbf{H}_{21}(\mathbf{x}_{21}^1+\mathbf{x}_{21}^2+\mathbf{x}_{21}^3)+\mathbf{H}_{22}(\mathbf{x}_{22}^1+\mathbf{x}_{22}^2+\mathbf{x}_{22}^3)\\
               &+&\underbrace{\mathbf{H}_{21}(\mathbf{x}_{11}^1+\mathbf{x}_{11}^2+\mathbf{x}_{11}^3)+\mathbf{H}_{22}(\mathbf{x}_{12}^1+\mathbf{x}_{12}^2+\mathbf{x}_{12}^3)}_{\rm{interferece}}+\mathbf{z}_2\nonumber\\\label{eqn:R2}
\end{eqnarray}
In order to  null out $\mathbf{x}_{11}^1$, $\mathbf{x}_{11}^2$ and $\mathbf{x}_{12}^1$, $\mathbf{x}_{12}^2$ at $R_2$, we let
\begin{eqnarray}
  \mathbf{H}_{21}\left[\begin{array}{ccc}\label{v11}
                     \mathbf{v}_{11}^1& \cdots & \mathbf{v}_{11}^{L_1}
                  \end{array}\right]=\mathbf{0}\\
                  \mathbf{H}_{21}\left[\begin{array}{ccc}
                     \mathbf{w}_{11}^1& \cdots & \mathbf{w}_{11}^{L_2}
                  \end{array}\right]=\mathbf{0}
\end{eqnarray}
and
\begin{eqnarray}
  \mathbf{H}_{22}\left[\begin{array}{ccc}
                     \mathbf{v}_{12}^1& \cdots & \mathbf{v}_{12}^{J_1}
                  \end{array}\right]=\mathbf{0}\label{v12}\\
                  \mathbf{H}_{22}\left[\begin{array}{ccc}
                     \mathbf{w}_{12}^1& \cdots & \mathbf{w}_{12}^{J_2}
                  \end{array}\right]=\mathbf{0}\label{v12b}
\end{eqnarray}

% \mathbf{H}_{22}\left[\begin{array}{ccc}
%                     \mathbf{u}_{12}^1& \cdots & \mathbf{u}_{12}^{J_3}
%                  \end{array}\right]=\mathbf{0}\label{v12c}

These can be achieved by letting
\begin{eqnarray}
% \nonumber to remove numbering (before each equation)
  \mathbf{v}_{11}^1, \cdots, \mathbf{v}_{11}^{L_1}&\subset& \rm{span}\{\mathbf{P}_{21}\}\label{L11}\\
  \mathbf{w}_{11}^l&=&\mathbf{j}\cdot\mathbf{v}_{11}^l, ~~l=1, 2,\cdots, L_2 \label{eqn:w11align}\\
   \mathbf{v}_{12}^1, \cdots, \mathbf{v}_{12}^{J_1} &\subset& \rm{span}\{\mathbf{P}_{22}\}\label{v12null}\\
  \mathbf{w}_{12}^{j_2}&=&\mathbf{j}\cdot\mathbf{v}_{12}^{j_2}, ~~j_2=1, 2, \cdots, J_2 \label{w12align}
\end{eqnarray}
which lead to
\begin{eqnarray}
% \nonumber to remove numbering (before each equation)
  L_2&\leq& L_1\leq M_1-N_2 \label{L12}\\
   J_2 &\leq& J_1 \leq M_2-N_2 \label{J2}
\end{eqnarray}

Further, we want each signal of $\mathbf{x}_{11}^3$ to be aligned with one signal of $\mathbf{x}_{12}^3$ at $R_2$ in real space, i.e.,
\begin{eqnarray}
% \nonumber to remove numbering (before each equation)
  \bar{H}_{22}\bar{U}_{12}^{l_3} &=& \mathrm{span}\{\bar{H}_{21}\bar{U}_{11}^{l_3}\}, ~~l_3=1, 2 \cdots L_3 \label{L3c} \\
   \mathbf{u}_{11}^1, \cdots, \mathbf{u}_{11}^{L_3}&\nsubseteq& \mathrm{span}\{\mathbf{P}_{21}\}\label{p21}\\
    J_3&=&L_3\leq N_2 \label{J3c}
\end{eqnarray}

Therefore, the precoding vectors of the signals intended to $R_1$ can be determined in the same way as those of the signals intended to $R_2$.

\subsection{Proof of Signal Independence}

%From the last sub-section we see that the signals within one group are designed to be independent of each other for sure.
%Next, we show that the signal groups on each receiver are independent of each other with the proposed precoder design scheme.

We first examine the received signals on $R_2$, which can be expressed as (\ref{Y2}),
\begin{figure*}
\begin{eqnarray}
% \nonumber to remove numbering (before each equation)
  \mathbf{Y}_2&=&\mathbf{H}_{21}(\mathbf{x}_{21}^1+\mathbf{x}_{21}^2+\mathbf{x}_{21}^3)+\mathbf{H}_{22}(\mathbf{x}_{22}^1+\mathbf{x}_{22}^2+\mathbf{x}_{22}^3)
  +\underbrace{\mathbf{H}_{21}\mathbf{x}_{11}^3+\mathbf{H}_{22}\mathbf{x}_{12}^3}_{\mathrm{interference}}+\mathbf{z}_2\nonumber\\
  &=&\underbrace{ \mathbf{H}_{21}[\mathbf{v}_{21}^1, \cdots,  \mathbf{v}_{21}^{K_2}]\mathbf{m}_{21}^1(1:K_2)+\mathbf{H}_{21}[\mathbf{w}_{21}^1, \cdots, \mathbf{w}_{21}^{K_2}]\mathbf{m}_{21}^2}_{\rm{aligned~in~complex~signal~level~(\ref{w21})}}
  +\mathbf{H}_{21}[\mathbf{v}_{21}^{K_2+1}, \cdots,  \mathbf{v}_{21}^{K_1}]\mathbf{m}_{21}^1(K_2+1:K_1)\nonumber\\
                &+&\mathbf{H}_{21}[\mathbf{u}_{21}^1, \cdots, \mathbf{u}_{21}^{K_3}]\mathbf{m}_{21}^3\nonumber\\
  &+&\underbrace{ \mathbf{H}_{22}[\mathbf{v}_{22}^1, \cdots,  \mathbf{v}_{22}^{G_2}]\mathbf{m}_{22}^1(1:G_2)+\mathbf{H}_{22}[\mathbf{w}_{22}^1, \cdots, \mathbf{w}_{22}^{G_2}]\mathbf{m}_{22}^2}_{\rm{aligned~in~complex~signal~level~(\ref{w22})}}
  +\mathbf{H}_{22}[\mathbf{v}_{22}^{G_2+1}, \cdots,  \mathbf{v}_{22}^{G_1}]\mathbf{m}_{22}^1(G_2+1:G_1)\nonumber\\
  &+&\mathbf{H}_{22}[\mathbf{u}_{22}^1, \cdots,  \mathbf{u}_{22}^{G_3}]\mathbf{m}_{22}^3
  +\underbrace{\mathbf{H}_{22}\left[\begin{array}{ccc}
                    \mathbf{u}_{12}^1& \cdots & \mathbf{u}_{12}^{J_3}
                 \end{array}\right](\mathbf{m}_{12}^3+\mathbf{m}_{11}^3)}_{\mathrm{interference~alignment~(\ref{L3c}),(\ref{J3c})}}+\mathbf{z}_2\label{Y2}
\end{eqnarray}
\end{figure*}
where $\mathbf{m}(i:j)$ denotes the the $i$th element to the $j$th element of vector $\mathbf{m}$.%\setcounter{equation}{49}
According to (\ref{w21}) and (\ref{w22}), it is obvious that $\mathbf{H}_{21}\mathbf{v}_{21}^k$ is inseparable with $\mathbf{H}_{21}\mathbf{w}_{21}^k$ ($k=1,2,\cdots, K_2$)
at complex signal level, so is $\mathbf{H}_{22}\mathbf{v}_{22}^g$ and $\mathbf{H}_{22}\mathbf{w}_{22}^g$ ($g=1,2,\cdots, G_2$).
However, since all messages are real, (\ref{Y2}) can be transformed into a real system as (\ref{Y22}),
\begin{figure*}
\begin{eqnarray}
% \nonumber to remove numbering (before each equation)
  \bar{Y}_2 &=&\bar{H}_{21}[\bar{V}_{21}^1, \cdots, \bar{V}_{21}^{K_2}]\mathbf{m}_{21}^1(1:K_2)
+\bar{H}_{21}[\bar{W}_{21}^1, \cdots, \bar{W}_{21}^{K_2}]\mathbf{m}_{21}^2+\bar{H}_{21}[\bar{V}_{21}^{K_2+1}, \cdots, \bar{V}_{21}^{K_1}]\mathbf{m}_{21}^1(K_2+1:K_1)\nonumber\\
&+&\bar{H}_{22}[\bar{V}_{22}^1, \cdots, \bar{V}_{22}^{G_2}]\mathbf{m}_{22}^1(1:G_2)
+\bar{H}_{22}[\bar{W}_{22}^1, \cdots, \bar{W}_{22}^{G_2}]\mathbf{m}_{22}^2+\bar{H}_{22}[\bar{V}_{22}^{G_2+1}, \cdots, \bar{V}_{22}^{G_1}]\mathbf{m}_{22}^1(G_2+1:G_1)~~
  \nonumber\\
                &+&\bar{H}_{21}[\bar{U}_{21}^1, \cdots, \bar{U}_{21}^{K_3}]\mathbf{m}_{21}^3
+\bar{H}_{22}[\bar{U}_{22}^1, \cdots, \bar{U}_{22}^{G_3}]\mathbf{m}_{22}^3
+\bar{H}_{22}[\bar{U}_{12}^1, \cdots, \bar{U}_{12}^{J_3}](\mathbf{m}_{12}^3+\mathbf{m}_{11}^3)+\bar{Z}_2\label{Y22}
\end{eqnarray}
\hrulefill
\end{figure*}
where%\setcounter{equation}{50}
\begin{eqnarray}
% \nonumber to remove numbering (before each equation)
  \bar{Y}_2 = \left[\begin{array}{c}
    \mathrm{Re}(\mathbf{Y}_2(1)) \\
    \mathrm{Im}(\mathbf{Y}_2(1)) \\
    \vdots \\
    \mathrm{Re}(\mathbf{Y}_2(N_2)) \\
    \mathrm{Im}(\mathbf{Y}_2(N_2))
  \end{array}\right]~,~ \bar{U}_{21}^{k_3} =\left[\begin{array}{c}
    \mathrm{Re}(\mathbf{u}_{21}^{k_3}(1)) \\
    \mathrm{Im}(\mathbf{u}_{21}^{k_3}(1)) \\
    \vdots \\
    \mathrm{Re}(\mathbf{u}_{21}^{k_3}(M_1)) \\
    \mathrm{Im}(\mathbf{u}_{21}^{k_3}(M_1))
  \end{array}\right]\nonumber\\ \bar{U}_{22}^{g_3} =\left[\begin{array}{c}
    \mathrm{Re}(\mathbf{u}_{22}^{g_3}(1)) \\
    \mathrm{Im}(\mathbf{u}_{22}^{g_3}(1)) \\
    \vdots \\
    \mathrm{Re}(\mathbf{u}_{22}^{g_3}(M_2)) \\
    \mathrm{Im}(\mathbf{u}_{22}^{g_3}(M_2))
  \end{array}\right]~,~\bar{U}_{12}^{j_3} =\left[\begin{array}{c}
    \mathrm{Re}(\mathbf{u}_{12}^{j_3}(1)) \\
    \mathrm{Im}(\mathbf{u}_{12}^{j_3}(1)) \\
    \vdots \\
    \mathrm{Re}(\mathbf{u}_{12}^{j_3}(M_2)) \\
    \mathrm{Im}(\mathbf{u}_{12}^{j_3}(M_2))
  \end{array}\right]\nonumber
\end{eqnarray}
and
\begin{eqnarray}
               % \nonumber to remove numbering (before each equation)
                 &&\bar{V}_{21}^{k} =\left[\begin{array}{c}
    \mathrm{Re}(\mathbf{v}_{21}^{k}(1)) \\
    \mathrm{Im}(\mathbf{v}_{21}^{k}(1)) \\
    \vdots \\
    \mathrm{Re}(\mathbf{v}_{21}^{k}(M_1)) \\
    \mathrm{Im}(\mathbf{v}_{21}^{k}(M_1))
  \end{array}\right]~,~ \bar{W}_{21}^{k} =\left[\begin{array}{c}
    -\mathrm{Im}(\mathbf{v}_{21}^{k}(1)) \\
    \mathrm{Re}(\mathbf{v}_{21}^{k}(1)) \\
    \vdots \\
    -\mathrm{Im}(\mathbf{v}_{21}^{k}(M_1)) \\
    \mathrm{Re}(\mathbf{v}_{21}^{k}(M_1))
  \end{array}\right]\nonumber\\&&\bar{V}_{21}^{k'} =\left[\begin{array}{c}
    \mathrm{Re}(\mathbf{v}_{21}^{k'}(1)) \\
    \mathrm{Im}(\mathbf{v}_{21}^{k'}(1)) \\
    \vdots \\
    \mathrm{Re}(\mathbf{v}_{21}^{k'}(M_1)) \\
    \mathrm{Im}(\mathbf{v}_{21}^{k'}(M_1))
  \end{array}\right]~(\mathrm{based~on~}(\ref{w21}))\label{21ind}
               \end{eqnarray}

\begin{eqnarray}
               % \nonumber to remove numbering (before each equation)
                 &&\bar{V}_{22}^{g} =\left[\begin{array}{c}
    \mathrm{Re}(\mathbf{v}_{22}^{g}(1)) \\
    \mathrm{Im}(\mathbf{v}_{22}^{g}(1)) \\
    \vdots \\
    \mathrm{Re}(\mathbf{v}_{22}^{g}(M_2)) \\
    \mathrm{Im}(\mathbf{v}_{22}^{g}(M_2))
  \end{array}\right]~,~ \bar{W}_{22}^{g} =\left[\begin{array}{c}
    -\mathrm{Im}(\mathbf{v}_{22}^{g}(1)) \\
    \mathrm{Re}(\mathbf{v}_{22}^{g}(1)) \\
    \vdots \\
    -\mathrm{Im}(\mathbf{v}_{22}^{g}(M_2)) \\
    \mathrm{Re}(\mathbf{v}_{22}^{g}(M_2))
  \end{array}\right]\nonumber\\&&\bar{V}_{22}^{g'} =\left[\begin{array}{c}
    \mathrm{Re}(\mathbf{v}_{22}^{g'}(1)) \\
    \mathrm{Im}(\mathbf{v}_{22}^{g'}(1)) \\
    \vdots \\
    \mathrm{Re}(\mathbf{v}_{22}^{g'}(M_2)) \\
    \mathrm{Im}(\mathbf{v}_{22}^{g'}(M_2))
  \end{array}\right]~(\mathrm{based~on~}(\ref{w22}))\label{22ind}
               \end{eqnarray}
for $k=1,2,\cdots, K_2$ , $k'=K_2+1,K_2+2,\cdots, K_1$, $g=1,2,\cdots, G_2$, $g'=G_2+1,G_2+2,\cdots, G_1$, $g_3=1,2,\cdots, G_3$, $k_3=1,2,\cdots, K_3$ and $j_3=1,2,\cdots, J_3$.

Next, we shall prove the independence of the received signal groups.
We first discuss the independence of the signals from transmitters $1$ and $2$, respectively. Let $\bar{V}_{rt}$, $\bar{W}_{rt}$ and $\bar{U}_{rt}$ denote the precoding matrix of $\mathbf{m}_{rt}^1$, $\mathbf{m}_{rt}^2$ and $\mathbf{m}_{rt}^3$, respectively. For example, $\bar{V}_{21}$ denotes $\left[
                                                                                   \begin{array}{ccc}
                                                                                     \bar{V}_{21}^1 & \cdots & \bar{V}_{21}^{K_1}\\
                                                                                   \end{array}\right]$, $\bar{W}_{21}$ denotes $\left[
                                                                                   \begin{array}{ccc}
                                                                                     \bar{W}_{21}^1 & \cdots & \bar{W}_{21}^{K_2}\\
                                                                                   \end{array}\right]$, and $\bar{U}_{21}$ denotes $\left[
                                                                                   \begin{array}{ccc}
                                                                                     \bar{U}_{21}^1 & \cdots & \bar{U}_{21}^{K_3}\\
                                                                                   \end{array}\right]$.

\begin{figure*}
\begin{eqnarray}\label{Y1}
% \nonumber to remove numbering (before each equation)
  \mathbf{Y}_1&=&\mathbf{H}_{11}(\mathbf{x}_{11}^1+\mathbf{x}_{11}^2+\mathbf{x}_{11}^3)+\mathbf{H}_{12}(\mathbf{x}_{12}^1+\mathbf{x}_{12}^2+\mathbf{x}_{12}^3)
  +\underbrace{\mathbf{H}_{11}\mathbf{x}_{21}^3+\mathbf{H}_{12}\mathbf{x}_{22}^3}_{\mathrm{interference}}+\mathbf{z}_1\nonumber\\
  &=& \nonumber\underbrace{\mathbf{H}_{11}[\mathbf{v}_{11}^1, \cdots,  \mathbf{v}_{11}^{L_2}]\mathbf{m}_{11}^1(1:L_2)+\mathbf{H}_{11}[\mathbf{w}_{11}^1, \cdots, \mathbf{w}_{11}^{L_2}]\mathbf{m}_{11}^2}_{\mathrm{aligned~in~complex~signal~level~(\ref{eqn:w11align}),(\ref{L12})}}+\mathbf{H}_{11}[\mathbf{v}_{11}^{L_2+1}, \cdots,  \mathbf{v}_{11}^{L_1}]\mathbf{m}_{11}^1(L_2+1:L_1)\nonumber\\
  &+&\mathbf{H}_{11}[\mathbf{u}_{11}^1, \cdots,  \mathbf{u}_{11}^{L_3}]\mathbf{m}_{11}^3\nonumber\\
                &+&\underbrace{\mathbf{H}_{12}[\mathbf{v}_{12}^1, \cdots, \mathbf{v}_{12}^{J_2}]\mathbf{m}_{12}^1(1:J_2)+\mathbf{H}_{12}[\mathbf{w}_{12}^1,  \cdots,  \mathbf{w}_{12}^{J_2}]\mathbf{m}_{12}^2}_{\mathrm{aligned~in~complex~signal~level~(\ref{w12align}),(\ref{J2})}}+\mathbf{H}_{12}[\mathbf{v}_{12}^{J_2+1}, \cdots, \mathbf{v}_{12}^{J_1}]\mathbf{m}_{12}^1(J_2+1:J_1)\nonumber\\
                &+&\mathbf{H}_{12}\left[\begin{array}{ccc}
                    \mathbf{u}_{12}^1& \cdots & \mathbf{u}_{12}^{J_3}
                 \end{array}\right]\mathbf{m}_{12}^3
                 +\underbrace{\mathbf{H}_{11}[\mathbf{u}_{21}^1, \cdots,  \mathbf{u}_{21}^{K_3}](\mathbf{m}_{21}^3+\mathbf{m}_{22}^3)}_{\mathrm{interference~alignment~(\ref{G3}),(\ref{K3})}}+\mathbf{z}_1
\end{eqnarray}
\hrulefill
\end{figure*}

We first show that $\left[
                                   \begin{array}{ccc}
                                     \bar{V}_{21} & \bar{W}_{21} & \bar{U}_{21}\\
                                   \end{array} \right]$ has full column rank.
According to (\ref{21ind}), we can see that
$\bar{V}_{21}^k$ and $\bar{W}_{21}^k$ are independent of each other. Further, since $K_2\leq M_1-N_1<M_1$ (according to (\ref{K1})), $\left[
                                                                                   \begin{array}{cccc}
                                                                                     \bar{V}_{21}^1 & \cdots & \bar{V}_{21}^{K_2} & \bar{W}_{21} \\
                                                                                   \end{array}
                                                                                 \right]\in\mathbb{R}^{2M_1\times 2K_2}
$ has full column rank almost for sure. In addition, based on (\ref{v21}), (\ref{w21}) and (\ref{K1}), $\bar{V}_{21}^{k'}$ can be designed to guarantee that $\left[
                                                                                   \begin{array}{cc}
                                                                                     \bar{V}_{21} & \bar{W}_{21}\\
                                                                                   \end{array}\right]\in\mathbb{R}^{2M_1\times (K_1+K_2)}
$ has full column rank.
Further, (\ref{u21}) implies that $\bar{U}_{21}$
is spanning in the different space with $\left[
                                                                                   \begin{array}{cc}
                                                                                     \bar{V}_{21} & \bar{W}_{21}\\
                                                                                   \end{array}\right]$. Since $K_3\leq N_1$ and $K_1+K_2+K_3<2M_1$,
                                                                                   $\left[
                                   \begin{array}{ccc}
                                     \bar{V}_{21} & \bar{W}_{21} & \bar{U}_{21}\\
                                   \end{array} \right]\in\mathbb{R}^{2M_1\times(K_1+K_2+K_3)}$ has full column rank $K_1+K_2+K_3$ almost for sure.
Finally, the signals from transmitter 1, $\bar{H}_{21}\left[
                                   \begin{array}{ccc}
                                     \bar{V}_{21} & \bar{W}_{21} & \bar{U}_{21}\\
                                   \end{array} \right]\in\mathbb{R}^{2N_2\times (K_1+K_2+K_3)}$, will have full column rank as long as $K_1+K_2+K_3\leq 2N_2$.

Then, we consider $\bar{H}_{22}\bar{X}_{22}^1,~\bar{H}_{22}\bar{X}_{22}^2,~\bar{H}_{22}\bar{X}_{22}^3$ and $\bar{H}_{22}\bar{X}_{12}^3$ (aligned with $\bar{H}_{21}\bar{X}_{11}^3$) from transmitter 2. Their precoding matrices are $\left[
                                   \begin{array}{cccc}
                                     \bar{V}_{22} & \bar{W}_{22} & \bar{U}_{22} & \bar{U}_{12}\\
                                   \end{array} \right]$. Similar to transmitter 1, $\left[
                                   \begin{array}{ccc}
                                     \bar{V}_{22} & \bar{W}_{22} & \bar{U}_{22}\\
                                   \end{array} \right]$
can be proved to have full column rank almost for sure. For $\bar{U}_{12}$, it is designed according to (\ref{L3c}) and (\ref{p21}), which implies that it is only related to $\bar{H}_{21}$ and $\bar{H}_{22}$. Since the channels are generic and irrespective of $\left[
                                   \begin{array}{ccc}
                                     \bar{V}_{22} & \bar{W}_{22} & \bar{U}_{22}\\
                                   \end{array} \right]$,
 $\bar{U}_{12}$ can be chosen to guarantee that $\left[
                                   \begin{array}{cccc}
                                     \bar{V}_{22} & \bar{W}_{22} & \bar{U}_{22} & \bar{U}_{12}\\
                                   \end{array} \right]$ has full column rank.
                                   As we can see,  $\bar{H}_{22}\left[
                                   \begin{array}{cccc}
                                     \bar{V}_{22} & \bar{W}_{22} & \bar{U}_{22} & \bar{U}_{12}\\
                                   \end{array} \right]\in\mathbb{R}^{2N_2\times (G_1+G_2+G_3+J_3)}$ will have full column rank as long as $G_1+G_2+G_3+J_3\leq2N_2$.

 According to (\ref{Y22}),  the received signals on $R_2$ can be expressed as $\left[
                                                                              \begin{array}{cc}
                                                                                \bar{H}_{21}(\bar{V}_{21}~\bar{W}_{21}~\bar{U}_{21}) & \bar{H}_{22}(\bar{V}_{22}~\bar{W}_{22}~\bar{U}_{22}~\bar{U}_{12})\\
                                                                              \end{array}
                                                                            \right]$,
where both $\bar{H}_{21}$ and $\bar{H}_{22}$ are generic random channels. Based on above discussion, the matrix will be of full column rank as long as $K_1+K_2+K_3+G_1+G_2+G_3+J_3\leq 2N_2$. Note that the number of desired signals and interference signals on $R_2$ is $D_2=K_1+K_2+K_3+G_1+G_2+G_3$ and $J_3$, respectively.
Therefore, we have
\begin{eqnarray}\label{eqn:const}
% \nonumber to remove numbering (before each equation)
  D_2=K_1+K_2+K_3+G_1+G_2+G_3\leq2N_2-J_3 \label{R2}
\end{eqnarray}

Next, we examine the received signals at $R_1$. The received signals in (\ref{eqn:R1})  is written in (\ref{Y1}).

Note that the structure of signal groups in (\ref{Y1}) is the same as that in (\ref{Y2}). Therefore, the independence of the signal groups can be proved in the same way as those on $R_2$.

Since $\bar{Y}_1\in \mathbb{R}^{2N_1\times1}$, the number of real dimensions on receiver $R_1$ is equal to $2N_1$. According to (\ref{Y1}), the number of
desired signals and interference signals on $R_1$ is $D_1=L_1+L_2+L_3+J_1+J_2+J_3$ and $K_3$, respectively. Therefore, the signal groups on $R_1$ will be independent of each other in real signal level as long as
\begin{eqnarray}
% \nonumber to remove numbering (before each equation)
  D_1=L_1+L_2+L_3+J_1+J_2+J_3\leq2N_1-K_3 \label{R1}
\end{eqnarray}

Therefore, the achievable DoF can be calculated as $\frac{D_1+D_2}{2}$. Obviously, $D_1+D_2$ is maximized when the equalities of (\ref{R2}) and (\ref{R1}) both hold.

Next, we investigate the maximum achievable DoF of Case $A$, Case $B$, and Case $C$ in Section \ref{A},\ref{B}, and \ref{C}, respectively.

\section{Achievable DoF of Case $A$}\label{A}

In this section, we show the achievable DoF of our scheme in MIMO X channels for $3N_2<M_1+M_2<2N_1+N_2$ and $M_t\geq N_r$.

Maximizing the achievable DoF ($\frac{1}{2}\sum_{r=1}^{2}\sum_{t=1}^{2}Q_{rt}$) is equivalent to maximizing the number of desired signals at each receiver.

\begin{theorem} In $2\times2$ MIMO X network with $M_t$ antennas at transmitter $t$ and $N_r$ antennas at receiver $r$, when $M_1\geq M_2\geq N_1\geq N_2$ and $3N_2<M_1+M_2<2N_1+N_2$, the total achievable DoF is $\frac{M_1+M_2+N_2}{2}$ (the outer-bound). The length of each message block is shown in Table \ref{table2}.
\end{theorem}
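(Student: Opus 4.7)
The plan is to prove achievability by exhibiting a feasible solution to the linear program formulated in Section \ref{IAC} whose objective value is exactly $\tfrac{M_1+M_2+N_2}{2}$; matching outer bound then forces equality.

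First, I would recall the constraints already collected: the per-link caps (\ref{K1}), (\ref{G1}), (\ref{L12}), (\ref{J2}); the alignment pairings $K_3=G_3$ and $L_3=J_3$ with $K_3\leq N_1$ and $J_3\leq N_2$; and the two receiver-side real-dimension bounds (\ref{R1}) and (\ref{R2}). By (\ref{Dsum}) the objective is one half of the total block length $L_1+L_2+L_3+K_1+K_2+K_3+J_1+J_2+J_3+G_1+G_2+G_3$.

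Second, summing (\ref{R1}) and (\ref{R2}) yields the clean aggregate bound
\begin{equation*}
\sum_i L_i+\sum_i K_i+\sum_i J_i+\sum_i G_i\ \leq\ 2(N_1+N_2)-K_3-J_3 ,
\end{equation*}
so reaching the outer bound $\tfrac{M_1+M_2+N_2}{2}$ forces both (\ref{R1}) and (\ref{R2}) to be tight and requires $K_3+J_3=2N_1+N_2-M_1-M_2$. The first Case $A$ hypothesis $M_1+M_2<2N_1+N_2$ makes this target strictly positive, so nonnegativity of $K_3,J_3$ does not rule it out.

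Third, the explicit assignment (to be displayed in Table \ref{table2}) that I would verify takes $J_3=L_3=0$, $K_3=G_3=2N_1+N_2-M_1-M_2$, $K_1=K_2=M_1-N_1$, $G_1=G_2=M_2-N_1$, and distributes the residual $M_1+M_2-N_2$ real dimensions across $L_1,L_2,J_1,J_2$ subject to $L_2\leq L_1\leq M_1-N_2$ and $J_2\leq J_1\leq M_2-N_2$. Three checks follow: (i) $K_3\leq N_1$, which reduces to $M_1+M_2\geq N_1+N_2$ and is a consequence of $M_t\geq N_r$; (ii) tightness of (\ref{R2}), an identity check $2(M_1-N_1)+2(M_2-N_1)+2K_3=2N_2$; and (iii) feasibility of the residual allocation on $R_1$, for which the per-transmitter caps give total slack $2M_1+2M_2-4N_2$, and this slack meets or exceeds the required $M_1+M_2-N_2$ precisely when $M_1+M_2\geq 3N_2$, which is the second Case $A$ hypothesis. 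Plugging back into the objective yields $\tfrac{1}{2}(M_1+M_2+N_2)$, matching the outer bound.

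The conceptual core of the argument is that the two strict inequalities defining Case $A$ are exactly the feasibility certificates for this construction: $M_1+M_2<2N_1+N_2$ legitimizes the positive choice of $K_3$, while $M_1+M_2>3N_2$ legitimizes packing the leftover signals onto $R_1$ while respecting (\ref{L12}) and (\ref{J2}). The only real obstacle is producing an explicit ordered assignment of $(L_1,L_2,J_1,J_2)$ summing to $M_1+M_2-N_2$; a clean recipe is to saturate $L_1=M_1-N_2$, $J_1=M_2-N_2$ and then choose $L_2\in[\max(0,\,2N_2-M_2),\,\min(N_2,\,M_1-N_2)]$ with $J_2=N_2-L_2$, an interval which the Case $A$ inequalities render nonempty. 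Everything else is bookkeeping.
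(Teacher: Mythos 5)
Your construction is correct and follows essentially the same route as the paper: saturate the zero-forcing precoders ($K_1=K_2=M_1-N_1$, $G_1=G_2=M_2-N_1$, $J_1$ or the $J$/$L$ block at its cap), set $J_3=L_3=0$ and $K_3=G_3=2N_1+N_2-M_1-M_2$, and observe that the two Case-$A$ inequalities are exactly the certificates that $K_3>0$ and that the remaining $M_1+M_2-N_2$ signals fit within the caps (\ref{L12}) and (\ref{J2}). The only divergence is cosmetic: the paper fixes $J_1=J_2=M_2-N_2$ and splits $L_1,L_2$ as $\lceil\frac{M_1-M_2+N_2}{2}\rceil$ and $\lfloor\frac{M_1-M_2+N_2}{2}\rfloor$ so the block lengths match Table \ref{table2} literally, whereas your recipe of saturating $L_1=M_1-N_2$, $J_1=M_2-N_2$ and picking $L_2$ in the stated interval generally produces a different (but equally feasible) split with the same sum, so to prove the theorem as stated -- which references Table \ref{table2} -- you should either adopt the paper's split or remark that the table records one representative optimal assignment.
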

\begin{table*}[t!]
\begin{center}\caption{Length of Message Vectors in Case $A$ ($3N_2<M_1+M_2<2N_1+N_2$)}
\renewcommand{\arraystretch}{2 }
\centering\label{table2}
\begin{tabular} {c |c |c |c |c}
\hline
$Q_{11}$  & $Q_{21}$ & $Q_{12}$ & $Q_{22}$ & Achievable DoF\\
 \hline
 $M_1-M_2+N_2$ & $M_1-M_2+N_2$ &$2(M_2-N_2)$ & $M_2-M_1+N_2$ & $\frac{M_1+M_2+N_2}{2}$\\
\hline\hline
\end{tabular}
\end{center}
\hrulefill
\end{table*}

\begin{proof} The achievable DoF is obtained by maximizing $D_1+D_2$, while satisfying the constraints of all parameters.
Therefore, it can be formulated as the following optimization problem:
\begin{align}
&\max (D_1+D_2)\nonumber\\&=\max(K_1+K_2+2K_3+G_1+G_2+L_1+L_2+2J_3+J_1+J_2)\label{opti}\\
\mathrm{st.} ~&K_1+K_2+2K_3+G_1+G_2=2N_2-J_3~(\ref{R2})\nonumber\\ &L_1+L_2+2J_3+J_1+J_2=2N_1-K_3 ~(\ref{R1})\nonumber\\
      &K_2\leq K_1\leq M_1-N_1 ~(\ref{K1})~\mathrm{and}~ G_2\leq G_1\leq M_2-N_1 ~(\ref{G1}) \nonumber\\
      &J_2\leq J_1\leq M_2-N_2 ~(\ref{J2})~\mathrm{and}~ L_2\leq L_1\leq M_1-N_2 ~(\ref{L12})\nonumber\\
      &G_3=K_3\leq N_1 ~(\ref{K3N1})~\mathrm{and}~ L_3=J_3\leq N_2 ~(\ref{J3c})\nonumber\\
      &M_1\geq M_2\geq N_1\geq N_2 ~\mathrm{and}~ 3N_2<M_1+M_2<2N_1+N_2\nonumber
      \end{align}

To solve the problem, we first maximize $K_1$, $K_2$, $G_1$, $G_2$, $J_1$ and $J_2$ by letting $K_2=K_1=M_1-N_1 ~,~ G_2= G_1= M_2-N_1 ~\mathrm{and}~ J_2= J_1= M_2-N_2$. Then, (\ref{opti}) becomes
\begin{align}
&\max (2K_3+L_1+L_2+2J_3)\label{optii}\\
\mathrm{st.} ~&2K_3+J_3=2(2N_1+N_2-M_1-M_2)~(\ref{R2}.a)\nonumber\\ &K_3+L_1+L_2+2J_3=2N_1-2M_2+2N_2 ~(\ref{R1}.a)\nonumber\\
      &L_2\leq L_1\leq M_1-N_2 ~(\ref{L12}),~G_3=K_3\leq N_1 ~(\ref{K3N1})\nonumber\\ &L_3=J_3\leq N_2 ~(\ref{J3c})\nonumber\\
      &M_1\geq M_2\geq N_1\geq N_2 ~\mathrm{and}~ 3N_2<M_1+M_2<2N_1+N_2\nonumber
      \end{align}

Taking (\ref{R1}.a) into (\ref{optii}), the optimization problem can be expressed as
\begin{align}
&\max (K_3+2N_1-2M_2+2N_2)=\max(K_3)\label{optk3}\\
\mathrm{st.} ~&2K_3+J_3=2(2N_1+N_2-M_1-M_2)~(\ref{R2}.a)\nonumber\\ &K_3+L_1+L_2+2J_3=2N_1-2M_2+2N_2 ~(\ref{R1}.a)\nonumber\\
      &L_2\leq L_1\leq M_1-N_2 ~(\ref{L12}),~G_3=K_3\leq N_1 ~(\ref{K3N1})\nonumber\\ &L_3=J_3\leq N_2 ~(\ref{J3c})\nonumber\\
      &M_1\geq M_2\geq N_1\geq N_2 ~\mathrm{and}~ 3N_2<M_1+M_2<2N_1+N_2\nonumber
      \end{align}

According to (\ref{R2}.a), we can see that maximizing $K_3$ is equivalent to minimizing $J_3$. As a result, we let $J_3=L_3=0$. Hence, $K_3=G_3=2N_1+N_2-M_1-M_2$. Note that since $M_1+M_2<2N_1+N_2$, $K_3>0$ is guaranteed. Further, since $M_1\geq M_2\geq N_1\geq N_2$, $K_3\leq N_1$ holds.

Finally, only $L_1$ and $L_2$ are left to be determined. They are constrained by
\begin{eqnarray}
% \nonumber to remove numbering (before each equation)
  L_1+L_2=M_1-M_2+N_2 ~\mathrm{and}~L_2\leq L_1\leq M_1-N_2
\end{eqnarray}

Accordingly, we choose $L_1$ and $L_2$ as follows.
\begin{eqnarray}
% \nonumber to remove numbering (before each equation)
 L_1=\lceil\frac{M_1-M_2+N_2}{2}\rceil ~\mathrm{and}~ L_2=\lfloor\frac{M_1-M_2+N_2}{2}\rfloor\label{L1+}
\end{eqnarray}

Obviously, $L_1+L_2=M_1-M_2+N_2$. Next, we show that (\ref{L12}) is also satisfied, i.e., $L_2\leq L_1\leq M_1-N_2$. Since $3N_2<M_1+M_2$, it is easy to get that $\frac{M_1-M_2+N_2}{2}<M_1-N_2$, which leads to
\begin{eqnarray}
% \nonumber to remove numbering (before each equation)
 \lceil\frac{M_1-M_2+N_2}{2}\rceil\leq \lceil M_1-N_2\rceil\label{M1N2}
\end{eqnarray}

Since $M_1-N_2$ is always an integer, $\lceil M_1-N_2\rceil=M_1-N_2$. Hence, from (\ref{L1+}) and (\ref{M1N2}) we can get that
\begin{eqnarray}
% \nonumber to remove numbering (before each equation)
 L_2\leq L_1=\lceil\frac{M_1-M_2+N_2}{2}\rceil\leq M_1-N_2.
\end{eqnarray}

Therefore, (\ref{L12}) is satisfied.
As the length of all message groups have been determined, finally we have
\begin{eqnarray}
% \nonumber to remove numbering (before each equation)
  Q_{11}&=&L_1+L_2=M_1-M_2+N_2\nonumber\\
  Q_{21}&=&K_1+K_2+K_3=M_1-M_2+N_2\nonumber\\
  Q_{12}&=&J_1+J_2=2(M_2-N_2)\nonumber\\
  Q_{22}&=&G_1+G_2+G_3=M_2-M_1+N_2
\end{eqnarray}

Note that since $M_1\geq M_2$, $Q_{11}=Q_{21}>0$. Since $3N_2<2N_1+N_2$, we have $M_2\geq N_1>N_2$ and $Q_{12}>0$. Since $2N_1+N_2>M_1+M_2\geq M_1+N_1$, we can get $N_1+N_2>M_1$. Hence, we have $M_2+N_2\geq N_1+N_2>M_1$ and $Q_{22}>0$.

Finally, the DoF can be calculated as $\frac{Q_{11}+Q_{21}+Q_{12}+Q_{22}}{2}=\frac{M_1+M_2+N_2}{2}$, which equals the outer bound.
\end{proof}

\begin{example} Two examples of ($M_1,~M_2,~N_1,~N_2$) are given, which are $(2,2,2,1)$ and $(7,6,5,4)$, respectively. For $(2,2,2,1)$, we can get $L_1=1$ and $L_2=L_3=0$; $K_1=K_2=0$ and $K_3=1$; $J_1=J_2=1$ and $J_3=0$; $G_1=G_2=0$ and $G_3=1$. Five signals are transmitted, achieving DoF of $\frac{5}{2}$. For $(7,6,5,4)$, we have $L_1=3$, $L_2=2$ and $L_3=0$; $K_1=K_2=2$ and $K_3=1$; $J_1=J_2=2$ and $J_3=0$; $G_1=G_2=G_3=1$. Totally $17$ signals are transmitted, achieving DoF of $\frac{17}{2}$. The outer-bound DoF is achieved in both examples.
\end{example}

\begin{remark} The results in \emph{Theorem 1} can be explained in a general and straightforward way as follows.

%It can be seen that to achieve the optimal DoF, all the dimensions on $R_2$ should be used for desired signals, and the interference dimensions on $R_1$ needs to be as few as possible.

The network can be viewed as three concatenated sub-networks, as shown in Fig. \ref{a}.  In sub-network $1$, link $T_2$--$R_1$ only contains messages intended to $R_1$. In sub-network $2$, link $T_1$--$R_2$ and link $T_2$--$R_2$ both contain messages intended to $R_2$. In sub-network $3$, link $T_1$--$R_1$ contains messages intended to $R_1$.

%Although in reality all the messages are sent simultaneously through complex channels, equivalently each message in each link is sent through an independent real dimension. The proposed scheme is to explore all the available dimensions.

\begin{figure}[t!]
  \begin{center}
%\vskip 10mm \hskip 5mm
        %\resizebox{4cm}{2.8cm}{\includegraphics{p1.eps}}
        \includegraphics[width=1\columnwidth]{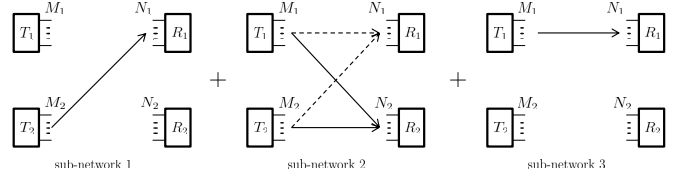}
        \caption{The three concatenated sub-networks (Case $A$)}
        \label{a}
    \end{center}
\end{figure}

In sub-network $1$, there are equivalently $2N_1$ real dimensions for link $T_2$--$R_1$, and $2(M_2-N_2)$ of them are interference free for $R_2$. To maximize $Q_{12}$ while avoid interfering $R_2$, $T_2$ transmits $2(M_2-N_2)$ messages via those $2(M_2-N_2)$ interference free dimensions, i.e, $Q_{12}= 2(M_2-N_2)$. (Note that in case $A$, $M_2$ is always larger than $N_2$.)

In sub-network $2$, there are now $2(N_1+N_2-M_2)$ and $2N_2$ dimensions on $R_1$ and $R_2$, respectively. Links $T_1$--$R_2$ and $T_2$--$R_2$ have $2(M_1-N_1)$ and $2(M_2-N_1)$ real dimensions that are interference free for $R_1$, respectively. These dimensions will be chosen at first by their corresponding transmitters, and will occupy totally $2(M_1-N_1)+2(M_2-N_1)=2M_1+2M_2-4N_1$ real dimensions on $R_2$. As we can see, there are still $2(2N_1+N_2-M_1-M_2)$ dimensions available on $R_2$ (note that $2N_1+N_2>M_1+M_2$ in Case $A$), which means it can still accommodate $2(2N_1+N_2-M_1-M_2)$ more messages. Hence, $T_1$ and $T_2$ can use $2N_1+N_2-M_1-M_2$ more dimensions to transmit messages to $R_2$. Note that these signals can be aligned one-to-one at $R_1$, and thereby generating totally $2N_1+N_2-M_1-M_2$ interference dimensions at $R_1$. Therefore, we have $Q_{21}=2(M_1-N_1)+(2N_1+N_2-M_1-M_2)=M_1-M_2+N_2$ and $Q_{22}=2(M_2-N_1)+(2N_1+N_2-M_1-M_2)=M_2-M_1+N_2$.

Then, in sub-network $3$ there are now only $2(N_1+N_2-M_2)-(2N_1+N_2-M_1-M_2)=M_1-M_2+N_2$ dimensions left on $R_1$ and no dimensions left on $R_2$. It implies that at most $M_1-M_2+N_2$ messages can be transmitted to $R_1$ through link $T_1$--$R_1$, but no interference can be caused on $R_2$. Note that the number of dimensions that are interference free for $R_2$ on link $T_1$--$R_1$ is $2(M_1-N_2)$, and note that $2(M_1-N_2)>M_1-M_2+N_2$ (because $M_1+M_2>3N_2$), we can always find $M_1-M_2+N_2$ real dimensions to transmit $M_1-M_2+N_2$ messages through link $T_1$--$R_1$ without generating any interference to $R_2$. Therefore, $Q_{11}=M_1-M_2+N_2$.
\end{remark}

\section {Achievable DoF of Case $B$}\label{B}

In this section, we show the achievable DoF of our scheme in MIMO X channels for $M_1+M_2\geq2N_1+N_2$ and $M_t\geq N_r$.

\begin{theorem} In $2\times2$ MIMO X network with $M_t$ antennas at transmitter $t$ and $N_r$ antennas at receiver $r$, when $M_1\geq M_2\geq N_1\geq N_2$ and $M_1+M_2\geq2N_1+N_2$, the achievable DoF equals\\

~~~~~~~~~~~$\left\{\begin{array}{ll}
      N_1+N_2-1 & \mathrm{if}~ M_1\geq M_2=N_1=N_2\\
     N_1+N_2-\frac{1}{2} & \mathrm{if}~ M_1\geq M_2=N_1>N_2\\
     N_1+N_2  & \mathrm{if} ~M_1\geq M_2>N_1\geq N_2
   \end{array}\right.$\\

The length of each message block in different subcases is shown in Table \ref{table3}.\end{theorem}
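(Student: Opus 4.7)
The plan is to follow the same linear-program framework used in the proof of Theorem~1, now under the Case~$B$ hypothesis $M_1+M_2\geq 2N_1+N_2$ and with the additional requirement that each message-block total $Q_{rt}$ be strictly positive so the network retains its X-channel structure rather than degenerating into a broadcast or Z channel.

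The first step is to add the equality versions of (\ref{R1}) and (\ref{R2}), yielding the clean identity $D_1+D_2 = 2(N_1+N_2) - K_3 - J_3$, so that maximising the total DoF reduces to \emph{minimising} $K_3+J_3$ subject to feasibility. In the main subcase $M_2>N_1\geq N_2$, every null-space cap in (\ref{K1}), (\ref{G1}), (\ref{J2}), (\ref{L12}) is at least one, so I can take $K_3=J_3=0$ and still split the budgets $K_1+K_2+G_1+G_2=2N_2$ and $L_1+L_2+J_1+J_2=2N_1$ across those caps. The aggregate capacities are $2(M_1+M_2-2N_1)$ and $2(M_1+M_2-2N_2)$, both sufficient by the Case~$B$ inequality together with $N_1\geq N_2$. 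This realises $D_1+D_2=2(N_1+N_2)$ with all four $Q_{rt}>0$, attaining the outer bound $N_1+N_2$.

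The main obstacle, which produces the DoF gap, sits at the boundary $M_2=N_1$. Here (\ref{G1}) collapses to $G_1=G_2=0$, so keeping $Q_{22}>0$ forces the aligned block $G_3=K_3\geq 1$; the identity above then yields $D_1+D_2\leq 2(N_1+N_2)-1$ and DoF $\leq N_1+N_2-\tfrac{1}{2}$. I would show this is achieved by choosing $K_3=1$, $J_3=0$ and distributing the remaining $2N_2-2$ and $2N_1-1$ units among $(K_1,K_2)$ and $(L_1,L_2,J_1,J_2)$ respectively; feasibility of these caps follows from $M_1\geq N_1+N_2$, which is exactly the Case~$B$ inequality specialised to $M_2=N_1$. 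In the further-degenerate sub-case $M_2=N_1=N_2$, (\ref{J2}) additionally forces $J_1=J_2=0$, so now $Q_{12}=J_3$ must also be positive; requiring both $K_3\geq 1$ and $J_3\geq 1$ gives $D_1+D_2\leq 2(N_1+N_2)-2$ and DoF $\leq N_1+N_2-1$. The final step is to exhibit explicit block lengths realising the upper bound in each subcase (recording them in Table~\ref{table3}) and to check the alignment caps (\ref{K3N1}) and (\ref{J3c}), namely $K_3\leq N_1$ and $J_3\leq N_2$, which both hold by construction.
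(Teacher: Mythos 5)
Your proposal follows essentially the same route as the paper's proof: reducing the objective to minimizing $K_3+J_3$ via the equalities of (\ref{R2}) and (\ref{R1}), observing that $M_2=N_1$ forces $G_1=G_2=0$ and hence $G_3=K_3\geq1$ (and additionally $J_3\geq1$ when $M_2=N_2$), and verifying feasibility of the remaining caps from $M_1\geq N_1+N_2$. The only items you defer --- the explicit splits of $2N_2$ and $2N_1$ across $(K_1,K_2,G_1,G_2)$ and $(L_1,L_2,J_1,J_2)$ in the subcase $M_2>N_1$, which the paper handles with the $\min\{2(M_2-N_1),N_2\}$-type case distinction, and the observation that $Q_{rt}\geq1$ in the fully degenerate subcase requires $N_1=N_2>1$ --- are routine bookkeeping already present in the paper.
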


\begin{table*}
[t!]
\begin{center}\caption{Length of Message Vectors in Case $B$ ($M_1+M_2\geq 2N_1+N_2$)}
\renewcommand{\arraystretch}{2 }
\centering\label{table3}
\begin{tabular} {c |c |c |c| c| c}
\hline \hline
 & $Q_{11}$  & $Q_{21}$ & $Q_{12}$ & $Q_{22}$ & Achievable DoF\\
\hline
(1)$M_1\geq M_2=N_1=N_2$ &$ 2N_1-2 $ & $ 2N_2-2 $& $1$ & $1$ & $N_1+N_2-1$\\
\hline
(2)$M_1\geq M_2=N_1>N_2 $ &  $2N_2-1$ &$ 2N_2-1$ &$2(M_2-N_2)$ & $1$ & $N_1+N_2-\frac{1}{2}$\\
\hline

(3)$M_1\geq M_2>N_1\geq N_2$ & $2N_1-Q_{12}$ & $2N_2-Q_{22}$ & $\min\{2(M_2-N_2)~,N_1\}$ &$\min\{2(M_2-N_1)~,N_2\}$ & $N_1+N_2$\\
\hline\hline
\end{tabular}
\end{center}
\hrulefill
\end{table*}

We divide Case $B$ into three subcases as shown in Table \ref{table3}. The achievable DoF of each subcase is investigated one by one as follows.

\subsection{When $M_1\geq M_2=N_1=N_2$ ((1) of Case $B$)}

\begin{proof} First, since $M_1+M_2\geq 2N_1+N_2$ and $M_2=N_1=N_2$, we can exclude $M_1=M_2=N_1=N_2$ from this subcase.

According to (\ref{G1}) and (\ref{J2}), we can get $G_1=G_2=J_1=J_2=0$. To ensure $Q_{12}=J_1+J_2+J_3>0$ and $Q_{22}=G_1+G_2+G_3>0$, $J_3\geq1$ and $G_3\geq1$ must be added as constraints.

Also note that in this case (\ref{R2}) and (\ref{R1}) can be expressed as
\begin{eqnarray}
% \nonumber to remove numbering (before each equation)
  Q_{21}+Q_{22}+Q_{12} = 2N_2\nonumber \\
  Q_{11}+Q_{12}+Q_{22} = 2N_1 \nonumber
\end{eqnarray}
which is equivalent to
\begin{eqnarray}
% \nonumber to remove numbering (before each equation)
  Q_{21}+Q_{22}+Q_{12}+Q_{11}+Q_{12}+Q_{22} = 2N_1+2N_2 \nonumber
\end{eqnarray}

Since $Q_{rt}\geq1,~(r,t=1,2)$ and $N_1=N_2$, we can get $N_1=N_2\geq \frac{3}{2}$. It implies that $Q_{rt}\geq1,~(r,t=1,2)$ can only be achieved when $N_1=N_2>1$. Therefore, in this case we only need to consider $N_1>1$.

To maximize the achievable DoF, the optimization problem can be expressed as
\begin{align}
&\max (D_1+D_2)\nonumber\\&=\max(K_1+K_2+2K_3+L_1+L_2+2J_3)\label{opti2}\\
\mathrm{st.} ~&K_1+K_2+2K_3=2N_2-J_3~(\ref{R2}.b)\nonumber\\ &L_1+L_2+2J_3=2N_1-K_3 ~(\ref{R1}.b)\nonumber\\
      &K_2\leq K_1\leq M_1-N_1 ~(\ref{K1}),~L_2\leq L_1\leq M_1-N_2 ~(\ref{L12})  \nonumber\\
      &1\leq J_3=L_3\leq N_2~\mathrm{and}~ 1\leq G_3=K_3\leq N_1\nonumber\\
      &M_1>M_2= N_1= N_2 ~\mathrm{and}~ M_1+M_2\geq2N_1+N_2\nonumber
      \end{align}

By taking (\ref{R2}.$b$) and (\ref{R1}.$b$) into (\ref{opti2}), the optimization objective becomes
\begin{eqnarray}
% \nonumber to remove numbering (before each equation)
  \max (2N_2+2N_1-J_3-K_3)= \min(J_3+K_3)\nonumber
\end{eqnarray}

Therefore, we choose $J_3=L_3=K_3=G_3=1$. Then, (\ref{R2}.$b$) and (\ref{R1}.$b$) can be written as
\begin{eqnarray}
% \nonumber to remove numbering (before each equation)
  K_1+K_2 =2N_2-3~\mathrm{and}~K_2\leq K_1\leq M_1-N_1~(\ref{K1})\nonumber \\
  L_1+L_2=2N_1-3~\mathrm{and}~L_2\leq L_1\leq M_1-N_2~(\ref{L12})\nonumber
\end{eqnarray}

To satisfy the above constraints, we can choose
\begin{eqnarray}
% \nonumber to remove numbering (before each equation)
  K_1=N_2-1~\mathrm{and}~K_2=N_2-2\nonumber \\
  L_1=N_1-1~\mathrm{and}~L_2=N_1-2\nonumber
\end{eqnarray}

Since $ M_1+M_2\geq2N_1+N_2$ and $M_2=N_1=N_2$, we can get $M_1\geq N_1+N_2$. Then, it is easy to prove that $K_1=N_2-1\leq M_1-N_1$ and $L_1=N_1-1\leq M_1-N_2$. Therefore, (\ref{K1}) and (\ref{L12}) are satisfied.

Finally, the length of each message block can be calculated as
\begin{eqnarray}
% \nonumber to remove numbering (before each equation)
  Q_{11}&=&L_1+L_2+L_3=2N_1-2\nonumber\\
  Q_{21}&=&K_1+K_2+K_3=2N_2-2\nonumber\\
  Q_{12}&=&J_3=1\nonumber\\
  Q_{22}&=&G_3=1
\end{eqnarray}
as shown in (1) of Table \ref{table3}. As we can see, when $N_1=N_2>1$, $Q_{rt}\geq1,~(r,t=1,2)$. Hence, the DoF equals $\frac{Q_{11}+Q_{21}+Q_{12}+Q_{22}}{2}=N_1+N_2-1$. Note that the outer-bound for this case is $N_1+N_2$ \cite{MIMO}.
\end{proof}

\begin{example} One example for this case is $(6,3,3,3)$. Accordingly, we can get $L_1=2$ and $L_2=L_3=1$; $K_1=2$ and $K_2=K_3=1$; $G_1=G_2=0$ and $G_3=1$; $J_1=J_2=0$ and $J_3=1$. Totally $10$ signals are transmitted, achieving DoF of $5$. The outer bound is $N_1+N_2=6$.
\end{example}

\begin{remark} The network in subcase (1) can be divided into two concatenated sub-networks as shown in Fig. \ref{pb}. In sub-network 1, link $T_2$--$R_1$ and link $T_2$--$R_2$ contain messages intended to $R_1$ and $R_2$, respectively. Since $M_2=N_2$, link $T_2$--$R_1$ does not have any dimension that is interference free for $R_2$. To minimize the interference, it only transmits one message that occupies one interference dimension on $R_2$. Similarly, since $M_2=N_1$, link $T_2$--$R_2$ does not have any dimension that is interference free for $R_1$, it only transmits one message that occupies one interference dimension on $R_1$. As a result, $Q_{12}=Q_{22}=1$.

\begin{figure}[t!]
  \begin{center}
%\vskip 10mm \hskip 5mm
        %\resizebox{4cm}{2.8cm}{\includegraphics{p1.eps}}
        \includegraphics[width=0.7\columnwidth]{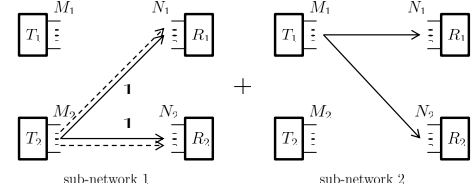}
        \caption{The two concatenated sub-networks ((1) of Case $B$)}
        \label{pb}
    \end{center}
\end{figure}
Now, there are $2N_1-1$ and $2N_2-1$ dimensions left on $R_1$ and $R_2$, respectively, which indicates that at most $2N_1-1$ messages can be transmitted to $R_1$ or $R_2$. In sub-network 2, link $T_1$--$R_1$ and link $T_1$--$R_2$ contain messages intended to $R_1$ and $R_2$, respectively. Note that there are $2(M_1-N_2)$ real dimensions in link $T_1$--$R_1$ that are interference free for $R_2$ and $2(M_1-N_2)\geq2N_1-1$ ($M_1\geq2N_1=2N_2$). Therefore, totally $2N_1-1$ messages can be transmitted in link $T_1$--$R_1$ via $2N_1-1$ dimensions that do not cause interference at $R_2$. Similarly, since there are $2(M_1-N_1)$ real dimensions in link $T_1$--$R_2$ that are interference free for $R_1$ and $2(M_1-N_1)\geq2N_2-1$ ($M_1\geq2N_1=2N_2$), totally $2N_2-1$ messages can be transmitted in link $T_1$--$R_2$ via $2N_2-1$ dimensions that do not cause interference at $R_1$. As a result, $Q_{11}=Q_{21}=2N_1-1=2N_2-1$.

Note that there is an alternative setup for links $T_1$--$R_1$ and $T_1$--$R_2$ in subnetwork 2. For link $T_1$--$R_1$, among the $2N_1-1$ messages to be sent, $2N_1-2$ of them are sent via $2N_1-2$ dimensions that are interference free for $R_2$. The last message is sent through a dimension that will cause interference to $R_2$, but the interference is aligned with that caused by link $T_2$--$R_1$. For link $T_1$--$R_2$, similarly, $2N_2-2$ messages can be sent via $2N_1-2$ dimensions that are interference free for $R_1$, while the last message is aligned with the interference caused by link $T_2$--$R_2$ on $R_1$.
This setup well matches our proposed signal design, while the results remain the same. It implies that there are multiple ways to design the transmitted signals to obtain the same achievable DoF.

Also note that if we let $T_2$ remain silent, $T_1$ can transmit $2N_1$ and $2N_2$ messages to $R_1$ and $R_2$, respectively, without generating any interference ($2(M_1-N_1)=2(M_1-N_2)\geq2N_1=2N_2$). In that case the optimal DoF is ($N_1+N_2$), but it is not an X network but a broadcast network.\end{remark}

\subsection{When $M_1\geq M_2=N_1>N_2$ ((2) of Case $B$)}

\begin{proof} First, since $M_1+M_2\geq 2N_1+N_2$ and $M_2=N_1$, we can exclude $M_1=M_2=N_1>N_2$ from this case and only focus on $M_1>M_2=N_1>N_2$.

Since $M_2=N_1$, we have $G_1=G_2=0$ (according to (\ref{G1})). Consequently, $K_3=G_3=Q_{22}\geq1$ must be added as one constraint of the optimization problem. Specifically, it can be written as
\begin{align}
&\max (D_1+D_2=K_1+K_2+2K_3+L_1+L_2+2J_3+J_1+J_2)\label{opti2'}\nonumber\\
\mathrm{st.} ~&K_1+K_2+2K_3=2N_2-J_3~(\ref{R2}.b)\nonumber\\ &L_1+L_2+2J_3+J_1+J_2=2N_1-K_3 ~(\ref{R1}.c)\nonumber\\
      &K_2\leq K_1\leq M_1-N_1 ~(\ref{K1}),~J_2\leq J_1\leq M_2-N_2 ~(\ref{J2})\nonumber\\ &L_2\leq L_1\leq M_1-N_2 ~(\ref{L12})\nonumber\\
      &1\leq G_3=K_3\leq N_1~\mathrm{and}~J_3=L_3\leq N_2\nonumber\\
      &M_1>M_2=N_1>N_2 ~\mathrm{and}~ M_1+M_2\geq 2N_1+N_2\nonumber
      \end{align}

Similar to subcase (1) of Case $B$, the optimization objective can be expressed as
\begin{eqnarray}
% \nonumber to remove numbering (before each equation)
  \max (2N_2+2N_1-J_3-K_3)= \min(J_3+K_3)\nonumber
\end{eqnarray}

Accordingly, $J_3$ and $K_3$ can be chosen as $J_3=L_3=0$ and $K_3=G_3=1$. Then, based on (\ref{R2}.$b$) and (\ref{R1}.$c$) we have
\begin{eqnarray}
% \nonumber to remove numbering (before each equation)
  &&K_1+K_2 =2N_2-2 ~\mathrm{and}~K_2\leq K_1\leq M_1-N_1 \nonumber\\
  &&L_1+L_2+J_1+J_2=2N_1-1\label{LJ}\nonumber\\
  &&J_2\leq J_1\leq M_2-N_2~\mathrm{and}~ L_2\leq L_1\leq M_1-N_2\nonumber
\end{eqnarray}

First, we choose
\begin{eqnarray}
% \nonumber to remove numbering (before each equation)
  K_1=K_2=N_2-1\nonumber
\end{eqnarray}

Since $M_1+M_2\geq 2N_1+N_2$ and $M_2=N_1$, we have $M_1\geq N_1+N_2$. It is clearly that $K_1=N_2-1\leq M_1-N_1$, so (\ref{K1}) is satisfied.

Then, we choose $J_1=J_2=M_2-N_2$, so (\ref{J2}) is satisfied. Consequently, we get $L_1+L_2=2N_2-1$. We choose $L_1=N_2$ and $L_2=N_2-1$. It can be seen that $L_1=N_2\leq M_1-N_2$ because $M_1\geq N_1+N_2\geq 2N_2$, so (\ref{L12}) is satisfied.

In summary, the length of each message block can be calculated as
\begin{eqnarray}
% \nonumber to remove numbering (before each equation)
  Q_{11}&=&L_1+L_2=2N_2-1\nonumber\\
  Q_{21}&=&K_1+K_2+K_3=2N_2-1\nonumber\\
  Q_{12}&=&J_1+J_2=2(M_2-N_2)\nonumber\\
  Q_{22}&=&G_3=1
\end{eqnarray}

The achievable DoF equals $\frac{Q_{11}+Q_{21}+Q_{12}+Q_{22}}{2}=N_1+N_2-\frac{1}{2}$. The outer-bound for this case is $N_1+N_2$ \cite{MIMO}.
\end{proof}

\begin{example} An example of this case is $(8,4,4,3)$. We can get $L_1=3$, $L_2=2$ and $L_3=0$; $K_1=K_2=2$ and $K_3=1$; $J_1=J_2=1$ and $J_3=0$; $G_1=G_2=0$ and $G_3=1$. Totally $13$ signals are transmitted, achieving DoF of $6.5$. The outer bound is $N_1+N_2=7$.
\end{example}

\begin{remark} The results for (2) of Case $B$ can be justified intuitively by two concatenated sub-networks as shown in Fig. \ref{b2}.
\begin{figure}[t!]
  \begin{center}
%\vskip 10mm \hskip 5mm
        %\resizebox{4cm}{2.8cm}{\includegraphics{p1.eps}}
        \includegraphics[width=0.7\columnwidth]{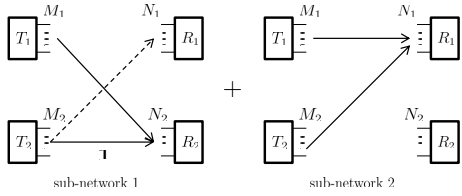}
        \caption{The two concatenated sub-networks ((2) of Case $B$)}
        \label{b2}
    \end{center}
\end{figure}
In sub-network 1, link $T_1$--$R_2$ and link $T_2$--$R_2$ both contain messages intended to $R_2$. Since $M_2=N_1$, link $T_2$--$R_2$ does not have any dimension that interference free for $R_1$. Therefore, it only transmits one message to $R_2$, while occupying one interference dimension on $R_1$. As a consequence, there are $2N_2-1$ real dimensions left on $R_2$, which means at most $2N_2-1$ messages can be transmitted through link $T_1$--$R_2$. Note that
there are $2(M_1-N_1)$ dimensions in link $T_1$--$R_2$ that are interference free for $R_1$. Since $M_1+M_2\geq2N_1+N_2$ and $M_1\geq N_1+N_2$, we can get $2(M_1-N_1)\geq2N_2-1$. Therefore, all $2N_2-1$ messages can be transmitted through link $T_1$--$R_2$ without generating any interference to $R_1$. Therefore, we have $Q_{22}=1$ and $Q_{21}=2N_2-1$.

In sub-network 2, link $T_1$--$R_1$ and link $T_2$--$R_1$ both contain messages intended to $R_1$. Note that there are $2N_1-1$ and zero dimensions left on $R_1$ and $R_2$, respectively, which means at most $2N_1-1$ messages can be transmitted to $R_1$ and no interference can be caused on $R_2$. In link $T_2$--$R_1$, there are $2(M_2-N_2)$ dimensions that are interference free for $R_2$, which are all used for transmitting $2(M_2-N_2)$ messages to $R_1$. Then, there are only $2N_2-1$ dimensions left on $R_1$, which means at most $2N_2-1$ messages can be transmitted via link $T_1$--$R_1$. Since there are $2(M_1-N_2)$ dimensions that are interference free for $R_2$ in link $T_1$--$R_1$, and $2(M_1-N_1)\geq2N_2-1$, all $2N_2-1$ messages can be transmitted without generating any interference to $R_2$. As a consequence, $Q_{12}=2(M_2-N_2)$ and $Q_{11}=2N_2-1$.

Similar to subcase (1), if we let link $T_2$--$R_2$ remain silent and link $T_1$--$N_2$  transmit $2N_2$ messages to $R_2$ with $2N_2$ dimensions that are interference free for $R_1$, then no interference will be caused on any receiver and the outer-bound DoF can be achieved. However, it is not an X network but a Z network.
\end{remark}

\subsection{When $M_1\geq M_2>N_1\geq N_2$ ((3) of Case $B$)}

\begin{proof} The optimization problem is formulated as follows.
\begin{align}
&\max (2N_2+2N_1-K_3-J_3)= \min (K_3+J_3)\\
\mathrm{st.} ~&K_1+K_2+2K_3+G_1+G_2=2N_2-J_3~(\ref{R2}.c)\nonumber\\ &L_1+L_2+2J_3+J_1+J_2=2N_1-K_3 ~(\ref{R1}.c)\nonumber\\
      &K_2\leq K_1\leq M_1-N_1 ~(\ref{K1}),~ G_2\leq G_1\leq M_2-N_1 ~(\ref{G1}) \nonumber\\
      &J_2\leq J_1\leq M_2-N_2 ~(\ref{J2}),~ L_2\leq L_1\leq M_1-N_2 ~(\ref{L12})\nonumber\\
      & G_3=K_3\leq N_1~\mathrm{and}~J_3=L_3\leq N_2\nonumber\\
      &M_1\geq M_2>N_1\geq N_2 ~\mathrm{and}~ M_1+M_2\geq2N_1+N_2\nonumber
      \end{align}

To minimize $K_3+J_3$, we can let $J_3=L_3=K_3=G_3=0$. Then, (\ref{R2}.$c$) and (\ref{R1}.$c$) become
\begin{eqnarray}
% \nonumber to remove numbering (before each equation)
  K_1+K_2+G_1+G_2=2N_2\label{K+} \\
  L_1+L_2+J_1+J_2=2N_1 \label{L+}
\end{eqnarray}
with the constraints of (\ref{K1}), (\ref{G1}), (\ref{L12}) and (\ref{J2}).

We first determine $K_1$, $K_2$, $G_1$, and $G_2$ based on (\ref{K+}), (\ref{K1}) and (\ref{G1}).

When $2(M_2-N_1)\geq N_2$, we let $G_1=K_1=\lceil\frac{N_2}{2}\rceil$ and $G_2=K_2=\lfloor\frac{N_2}{2}\rfloor$. We can see that $\lceil\frac{N_2}{2}\rceil\leq\lceil M_2-N_1\rceil=(M_2-N_1)\leq M_1-N_1$, which means (\ref{K1}) and (\ref{G1}) are both satisfied. Further, $Q_{21}=Q_{22}=N_2>0$.

When $2(M_2-N_1)< N_2$, we have $\lceil\frac{N_2}{2}\rceil > M_2-N_1$. If we still let $G_1=\lceil\frac{N_2}{2}\rceil$, then (\ref{G1}) will not be satisfied. As a consequence, we choose $G_1=G_2=M_2-N_1$ and $K_1=K_2=N_1+N_2-M_2$. Since $2N_1+N_2\leq M_1+M_2$, it can be proved that $K_1=N_1+N_2-M_2\leq M_1-N_1$. Hence, both (\ref{K1}) and (\ref{G1}) are satisfied. In addition, $Q_{22}=G_1+G_2=2(M_2-N_1)>0$ and $Q_{21}=K_1+K_2=2N_2-2(M_2-N_1)>0$ can be guaranteed.

Next, we determine $L_1$, $L_2$, $J_1$ and $J_2$ based on (\ref{L+}), (\ref{J2}) and (\ref{L12}).

When $2(M_2-N_2)\geq N_1$, we choose $L_1=J_1=\lceil\frac{N_1}{2}\rceil$ and $L_2=J_2=\lfloor\frac{N_1}{2}\rfloor$. We can prove that $\lceil\frac{N_1}{2}\rceil\leq\lceil M_2-N_2\rceil=(M_2-N_2)<(M_1-N_2)$ and $Q_{11}=Q_{12}=N_1>0$.

When $2(M_2-N_2)<N_1$, we can choose $J_1=J_2=M_2-N_2$ and $L_1=L_2=N_1+N_2-M_2$. Since $L_1=N_1+N_2-M_2\leq M_1-N_2$, (\ref{L12}) and (\ref{J2}) are satisfied. In addition, $Q_{11}=L_1+L_2=2N_1-2(M_2-N_2)>0$ and $Q_{12}=J_1+J-2=2(M_2-N_2)>0$ can be guaranteed.

The length of each message block can be calculated as
\begin{eqnarray}
% \nonumber to remove numbering (before each equation)
  Q_{12}&=&J_1+J_2=\min\{2(M_2-N_2), N_1\}\nonumber\\
  Q_{11}&=&L_1+L_2=2N_1-Q_{12}=\max\{2N_1+2N_2-2M_2, N_1\}\nonumber\\
  Q_{22}&=&G_1+G_2=\min\{2(M_2-N_1), N_2\}\nonumber\\
  Q_{21}&=&K_1+K_2=2N_2-Q_{22}=\max\{2N_1+2N_2-2M_2, N_2\}\nonumber
\end{eqnarray}

The achievable DoF can be calculated as $\frac{Q_{11}+Q_{12}+Q_{21}+Q_{22}}{2}=N_1+N_2$, which is equal to the outer-bound \cite{MIMO}.\end{proof}

\begin{example} Two examples are given for this case, which are $(4,4,3,2)$ and $(8,7,5,5)$. For $(4,4,3,2)$, we have $G_1=G_2=1$ and $G_3=0$; $K_1=K_2=1$ and $K_3=0$; $L_1=2$, $L_2=1$ and $L_3=0$; $J_1=2$, $J_2=1$ and $J_3=0$. Ten signals are sent, achieving DoF of $5$. For $(8,7,5,5)$, we have $G_1=G_2=2$ and $G_3=0$; $K_1=K_2=3$ and $K_3=0$; $L_1=L_2=3$ and $L_3=0$; $J_1=J_2=2$ and $J_3=0$. Twenty signals are sent, achieving DoF of $10$. The outer-bound DoF is achieved in both examples.\end{example}

\begin{remark} The intuitive explanation of this subcase can be referred to Fig. \ref{p2}.

\begin{figure}[t!]
  \begin{center}
%\vskip 10mm \hskip 5mm
        %\resizebox{4cm}{2.8cm}{\includegraphics{p1.eps}}
        \includegraphics[width=0.7\columnwidth]{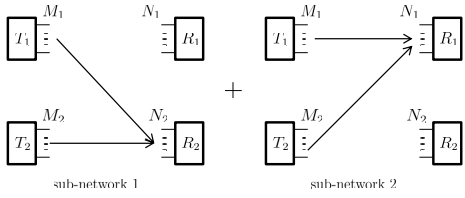}
        \caption{The two concatenated sub-networks ((3) of Case $B$)}
        \label{p2}
    \end{center}
\end{figure}

In sub-network 1, link $T_1$--$R_2$ and link $T_2$--$R_2$ both contain messages intended to $R_2$. Note that there are $2(M_1-N_1)$ and $2(M_2-N_1)$ dimensions that are interference free for $R_1$ in link $T_1$--$R_2$ and link $T_2$--$R_2$, respectively. Since $2(M_1-N_1)+2(M_2-N_1)>2N_2$, totally $2N_2$ messages can be transmitted to $R_2$ via the two links without generating any interference to $R_1$.
The similar argument can be made in sub network 2, totally $2N_1$ messages can be transmitted to $R_1$ via link $T_1$--$R_1$ and link $T_2$--$R_1$ without interfering $R_2$.
Therefore, totally $2N_1+2N_2$ messages can be transmitted.
\end{remark}

\section{Achievable DoF of Case $C$ }\label{C}

In this section, we show the achievable DoF of our scheme in MIMO X channels for $M_1+M_2\leq3N_2$ and $M_t\geq N_r$.

\begin{theorem} In $2\times2$ MIMO X network with $M_t$ antennas at transmitter $t$ and $N_r$ antennas at receiver $r$, when $M_1\geq M_2\geq N_1\geq N_2$ and $M_1+M_2\leq3N_2$, the achievable DoF equals
\begin{eqnarray}
% \nonumber to remove numbering (before each equation)
  M_1+M_2-N_2+\frac{1}{2}\lfloor\frac{2x}{3}\rfloor =~~~~~~~~~~~~~~~~~~~~~~ \nonumber\\\left\{\begin{array}{ll}
      \frac{2(M_1+M_2)}{3} & \mathrm{when}~ x\mod 3=0\\
     \frac{2(M_1+M_2)-1}{3} & \mathrm{when}~ x\mod 3=1\\
     \frac{4(M_1+M_2)-1}{6}  &\mathrm{when}~ x\mod 3=2
   \end{array}\right.\nonumber
\end{eqnarray}
where $x=3N_2-(M_1+M_2)$.
The length of each message block is shown in Table \ref{table4}.\end{theorem}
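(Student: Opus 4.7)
My plan is to follow exactly the linear-programming template used in the proofs of Theorems~1 and~2. Combining the alignment identities $L_3=J_3$ and $G_3=K_3$ with the receiver-dimension constraints (\ref{R2}) and (\ref{R1}), the objective can be rewritten as $D_1+D_2 = 2(N_1+N_2)-(J_3+K_3)$ whenever both receiver constraints are saturated at equality (which is necessary at any optimum). Hence maximising the DoF reduces to \emph{minimising} the integer sum $J_3+K_3$ subject to (\ref{K1})--(\ref{J3c}).

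To collapse the problem to two variables I would push each of $K_1,K_2,G_1,G_2,L_1,L_2,J_1,J_2$ up to its individual antenna cap ($M_1-N_1$, $M_2-N_1$, $M_1-N_2$, $M_2-N_2$ respectively). Substituting into the two receiver equalities forces
\begin{align*}
2J_3+K_3 \ \ge\ 2A,\qquad 2K_3+J_3 \ \ge\ 2B,
\end{align*}
where $A=N_1+2N_2-M_1-M_2$ and $B=2N_1+N_2-M_1-M_2$. Under the Case~$C$ hypotheses $M_1+M_2\le 3N_2$ and $M_1\ge M_2\ge N_1\ge N_2$ both $A,B\ge 0$, and $2A-B=x\ge 0$, $2B-A=3N_1-(M_1+M_2)\ge 0$; so the continuous LP optimum is the unique first-quadrant intersection $J_3^\star = 2x/3$, $K_3^\star = 2x/3+2(N_1-N_2)$, with minimum $J_3+K_3 = \tfrac{4x}{3}+2(N_1-N_2)$. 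Substituting this back into the objective recovers the outer bound $\tfrac{2(M_1+M_2)}{3}$ in the continuous relaxation.

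The main obstacle is the integer rounding, since $J_3^\star$ and $K_3^\star$ are simultaneously non-integer whenever $x\not\equiv 0\pmod 3$. Writing $x = 3k+r$ with $r\in\{0,1,2\}$, I would verify the three cases directly. For $r=0$ the pair $(J_3,K_3)=(2k,\;2k+2(N_1-N_2))$ is integral and saturates both constraints. For $r=1$, the candidate $(2k+1,\;2k+1+2(N_1-N_2))$ is integer-feasible with sum $4k+2+2(N_1-N_2)$, and a short enumeration of the lattice points of sum $4k+1+2(N_1-N_2)$ shows each violates at least one of $2J_3+K_3\ge 2A$, $2K_3+J_3\ge 2B$, so this is the integer minimum. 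For $r=2$, the pair $(2k+1,\;2k+2+2(N_1-N_2))$ saturates $2J_3+K_3=2A$ with one unit of slack in the other constraint, yielding integer minimum $4k+3+2(N_1-N_2)$; again a neighbouring-point check rules out any smaller sum. Substituting these three minima into $\mathrm{DoF}=(N_1+N_2)-\tfrac{1}{2}(J_3+K_3)$ and rearranging produces $\tfrac{2(M_1+M_2)}{3}$, $\tfrac{2(M_1+M_2)-1}{3}$, $\tfrac{4(M_1+M_2)-1}{6}$ for $r=0,1,2$ respectively, which coincides with $M_1+M_2-N_2+\tfrac{1}{2}\lfloor\tfrac{2x}{3}\rfloor$.

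Finally, with $(J_3,K_3)$ fixed as above and $L_3=J_3$, $G_3=K_3$, the remaining eight block-lengths are reconstructed from the two receiver equalities by splitting each of the sums $L_1+L_2$, $J_1+J_2$, $K_1+K_2$, $G_1+G_2$ as evenly as possible (a ceiling and a floor), exactly as in Case~$A$. Checking (\ref{K1})--(\ref{L12}) and $Q_{rt}>0$ is then routine given $M_1+M_2\le 3N_2$ and $M_1\ge M_2\ge N_1\ge N_2$, and it populates Table~\ref{table4}. The only step that genuinely requires care is the three-case integer rounding described above.
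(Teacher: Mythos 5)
Your reduction to a two-variable integer program is the right instinct, but it rests on a premise that the paper deliberately abandons in Case $C$, and this creates a genuine gap. You keep the pairing identities $G_3=K_3$ and $L_3=J_3$ throughout and assert that both receiver constraints (\ref{R2}) and (\ref{R1}) are saturated at any optimum, so that $D_1+D_2=2(N_1+N_2)-(J_3+K_3)$. But saturation forces $K_1+K_2+G_1+G_2=2N_2-J_3-2K_3\geq 0$ and $L_1+L_2+J_1+J_2=2N_1-K_3-2J_3\geq 0$, i.e., the hidden upper bounds $2K_3+J_3\leq 2N_2$ and $2J_3+K_3\leq 2N_1$, whereas your lattice-point enumeration only tests the lower bounds $2J_3+K_3\geq 2A$ and $2K_3+J_3\geq 2B$. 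Your $r=2$ candidate has $2K_3+J_3=2B+1$, which is feasible only if $M_1+M_2\geq 2N_1+1$. Take $(M_1,M_2,N_1,N_2)=(5,5,5,5)$, a legitimate Case $C$ instance with $x=5$ and $x\bmod 3=2$: your pair $(J_3,K_3)=(3,4)$ gives $K_1+K_2+G_1+G_2=-1$, and under the strict pairing $G_3=K_3$, $L_3=J_3$ the best achievable total is $D_1+D_2=12$ (DoF $6$), strictly below the theorem's $\frac{39}{6}=6.5$. So the three-case rounding, as written, does not establish achievability for all Case $C$ configurations.

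The paper's proof escapes this by relaxing the pairing to $G_3\geq K_3$ and $J_3\geq L_3$ (keeping (\ref{G3}) and (\ref{L3c}) only for the paired part), which changes the right-hand side of (\ref{R1}) from $2N_1-K_3$ to $2N_1-G_3$ and introduces a move your formulation lacks: an unpaired symbol in $\mathbf{x}_{22}^3$ delivers one desired symbol to $R_2$ at the cost of one dimension at each receiver, a different exchange rate from a paired symbol (two desired symbols for two dimensions at $R_2$ and one at $R_1$). It is precisely the options $G_3=K_3+1$ and/or $J_3=L_3+1$ that absorb the leftover dimension(s) when $x\not\equiv 0\pmod 3$ and close the integrality gap; the paper then reduces the problem to maximizing $K_3$ as a one-variable function of $L_3$ via (\ref{G3m}) and reads off $L_3=\lfloor\frac{2x}{3}\rfloor$. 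For $(5,5,5,5)$ this yields $K_3=L_3=J_3=3$, $G_3=4$, total $13$ and DoF $6.5$. To repair your argument you would need to introduce this relaxation (or an equivalent asymmetric device) before performing the integer rounding.
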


\begin{table*}
[t!]
\begin{center}\caption{Length of Message Vectors in Case $C$ ($3N_2\geq M_1+M_2$)}
\renewcommand{\arraystretch}{2 }
\centering\label{table4}
\begin{tabular} {c |c |c |c| c}
\hline \hline
 $Q_{11}$  & $Q_{21}$ & $Q_{12}$ & $Q_{22}$ & achievable Dof\\
\hline
  $2(M_1-N_2)+\lfloor\frac{2x}{3}\rfloor$ &$2(M_1-N_2)+\lfloor\frac{2x}{3}\rfloor$ &$M_2+N_2-M_1-\lceil\frac{x-1}{3}\rceil$ & $M_2+N_2-M_1-\lfloor\frac{x}{3}\rfloor$ & $M_1+M_2-N_2+\frac{1}{2}\lfloor\frac{2x}{3}\rfloor$\\
\hline\hline
\end{tabular}
\end{center}
\hrulefill
\end{table*}

\begin{proof} In this case, there is a slight difference in the design of precoders. Specifically, we let $G_3\geq K_3$ and $J_3\geq L_3$ instead of $K_3= G_3$ and $J_3=L_3$, but (\ref{G3}) and (\ref{L3c}) still hold. %The physical meaning of this setup will be explained after the proof.

In addition, the equalities of (\ref{R2}) and (\ref{R1}) may not always hold. Therefore, the optimization problem can be expressed as
\begin{align}
&\max (K_1+K_2+K_3+G_1+G_2+G_3\nonumber\\&~~~~~~+L_1+L_2+L_3+J_1+J_2+J_3)\label{opti3}\\
\mathrm{st.} ~&K_1+K_2+K_3+G_1+G_2+G_3\leq2N_2-J_3~(\ref{R2})\nonumber\\ &L_1+L_2++L_3+J_1+J_2+J_3\leq2N_1-G_3 ~(\ref{R1})\nonumber\\
      &K_2\leq K_1\leq M_1-N_1 ~(\ref{K1}),~ G_2\leq G_1\leq M_2-N_1 ~(\ref{G1}) \nonumber\\
      &J_2\leq J_1\leq M_2-N_2 ~(\ref{J2}),~ L_2\leq L_1\leq M_1-N_2 ~(\ref{L12})\nonumber\\
      &K_3\leq G_3\leq N_1 ~(\ref{K3N1})~\mathrm{and}~ L_3\leq J_3\leq N_2 ~(\ref{J3c})\nonumber\\
      &M_1\geq M_2\geq N_1\geq N_2 ~\mathrm{and}~ M_1+M_2\leq3N_2\nonumber
      \end{align}

First, we maximize $K_1$, $K_2$, $G_1$, $G_2$, $L_1$, $L_2$, $J_1$ and $J_2$, i.e.,
\begin{eqnarray}
% \nonumber to remove numbering (before each equation)
  K_1=K_2 = M_1-N_1,~G_1=G_2 = M_2-N_1\nonumber\\ L_1=L_2=M_1-N_2,~ J_1=J_2= M_2-N_2\label{KLJ}
\end{eqnarray}

Then, the optimization problem can be written as
\begin{align}
&\max (4M_1+4M_2-4N_1-4N_2+K_3+G_3+L_3+J_3)\nonumber\\&=\max (K_3+L_3+G_3+J_3)\label{opti3'}\\
\mathrm{st.} ~&2(M_1+M_2-2N_1)+K_3+G_3\leq2N_2-J_3~(\ref{R2})\nonumber\\ &2(M_1+M_2-2N_2)+J_3+L_3\leq2N_1-G_3 ~(\ref{R1})\nonumber\\
    &K_3\leq G_3\leq N_1 ~(\ref{K3N1}),~ L_3\leq J_3\leq N_2 ~(\ref{J3c})\nonumber\\
      &M_1\geq M_2\geq N_1\geq N_2 ~\mathrm{and}~ M_1+M_2\leq3N_2\nonumber
      \end{align}

To maximize $G_3+J_3+K_3+L_3$, we first maximize $G_3+J_3$ by letting the equality of (\ref{R1}) hold, i.e,
\begin{eqnarray}
% \nonumber to remove numbering (before each equation)
 G_3+J_3=2(N_1+2N_2-M_1-M_2)-L_3\label{cccc}
\end{eqnarray}

Substituting (\ref{cccc}) into (\ref{R2}), we have
\begin{eqnarray}
% \nonumber to remove numbering (before each equation)
  K_3 \leq 2(N_1-N_2)+L_3
\end{eqnarray}

Then, (\ref{opti3'}) becomes
\begin{align}
&\max (G_3+J_3+K_3+L_3)\nonumber\\&=\max(K_3+2(N_1+2N_2-M_1-M_2))\nonumber\\&=\max(K_3)\\
\mathrm{st.} ~& K_3 \leq 2(N_1-N_2)+L_3~(\ref{R2})\nonumber\\ &G_3+J_3=2(N_1+2N_2-M_1-M_2)-L_3 ~(\ref{R1})\nonumber\\
      &K_3\leq G_3\leq N_1 ~(\ref{K3N1})~\mathrm{and}~ L_3\leq J_3\leq N_2 ~(\ref{J3c})\nonumber\\
      &M_1\geq M_2\geq N_1\geq N_2 ~\mathrm{and}~ M_1+M_2\leq3N_2\nonumber
      \end{align}

Since $K_3\leq G_3$ and $L_3\leq J_3$, we have $K_3+L_3\leq G_3+J_3$, which is equivalent to
\begin{eqnarray}
% \nonumber to remove numbering (before each equation)
  K_3\leq G_3+J_3-L_3=2(N_1+2N_2-M_1-M_2)-2L_3 \label{cc}
\end{eqnarray}

Combining (\ref{R2}) and (\ref{cc}), we can get
\begin{eqnarray}
% \nonumber to remove numbering (before each equation)
   K_3=\mathrm{min}\{2(N_1-N_2)+L_3,~ 2(N_1+2N_2-M_1-M_2)-2L_3\}\nonumber\\
   (75)~~~~\label{K_3}\nonumber
\end{eqnarray}
\setcounter{equation}{75}
Let $x=3N_2-M_1-M_2$, (\ref{K_3}) can be expressed as
\begin{eqnarray}
 % \nonumber to remove numbering (before each equation)
 K_3 =\left\{\begin{array}{cc}
                             2(N_1-N_2)+L_3 & \mathrm{if}~ L_3\leq\frac{2}{3}x \\
                             2(N_1+2N_2-M_1-M_2)-2L_3 & \mathrm{if}~ L_3\geq\frac{2}{3}x
                           \end{array}\right.\label{G3m}
 \end{eqnarray}

The problem becomes finding $L_3$ so that $K_3$ is maximized. Let $K_3^1 =2(N_1-N_2)+\lfloor\frac{2}{3}x\rfloor$ and $K_3^2 =2(N_1+2N_2-M_1-M_2)-2\lceil\frac{2}{3}x\rceil$, $K_3=\max\{K_3^1,~K_3^2\}$.
Since $K_3^2-K_3^1=2x-\lfloor\frac{2}{3}x\rfloor-2\lceil\frac{2}{3}x\rceil\leq0$, we can get $K_3=K_3^1=2(N_1-N_2)+\lfloor\frac{2}{3}x\rfloor$ and $L_3=\lfloor\frac{2}{3}x\rfloor$.

Then, since $G_3+J_3=2(N_1+2N_2-M_1-M_2)-\lfloor\frac{2}{3}x\rfloor$ (according to (\ref{cccc})) and $G_3\geq K_3,~J_3\geq L_3$, we can choose
\begin{eqnarray}
% \nonumber to remove numbering (before each equation)
  G_3&=&2N_1+N_2-M_1-M_2-\lfloor\frac{x}{3}\rfloor\nonumber\\
  J_3&=&3N_2-M_1-M_2-\lceil\frac{x-1}{3}\rceil\nonumber
\end{eqnarray}

Note that $G_3-K_3=x-\lfloor\frac{x}{3}\rfloor-\lfloor\frac{2x}{3}\rfloor$ and $J_3-L_3=x-\lceil\frac{x-1}{3}\rceil-\lfloor\frac{2x}{3}\rfloor$, which can be expressed as
\begin{eqnarray}
\left\{\begin{array}{ll}
     G_3=K_3,~J_3=L_3 & \mathrm{when}~ x\mod 3=0\nonumber\\
    G_3=K_3+1,~J_3=L_3+1 & \mathrm{when}~ x\mod 3=1\nonumber\\
     G_3=K_3+1,~J_3=L_3 &\mathrm{when}~ x\mod 3=2\nonumber
   \end{array}\right.
\end{eqnarray}

In addition, since $M_1+M_2\geq N_1+N_2\geq 2N_2$, we have $2N_1+N_2-M_1-M_2\leq N_1$ and $3N_2-M_1-M_2\leq N_2$. Therefore, $G_3=2N_1+N_2-M_1-M_2-\lfloor\frac{x}{3}\rfloor\leq N_1$ and $J_3=3N_2-M_1-M_2-\lceil\frac{x-1}{3}\rceil\leq N_2$.

Finally, the length of each message block can be calculated as
\begin{eqnarray}
% \nonumber to remove numbering (before each equation)
  Q_{11}&=&L_1+L_2+L_3=2(M_1-N_2)+\lfloor\frac{2}{3}x\rfloor\nonumber\\
  Q_{21}&=&K_1+K_2+K_3=2(M_1-N_2)+\lfloor\frac{2}{3}x\rfloor\nonumber\\
  Q_{12}&=&J_1+J_2+J_3=M_2+N_2-M_1-\lceil\frac{x-1}{3}\rceil\nonumber\\
  Q_{22}&=&G_1+G_2+G_3=M_2+N_2-M_1-\lfloor\frac{x}{3}\rfloor\nonumber
\end{eqnarray}

Now, we show that $Q_{rt}>0~~(r,t=1,2)$. Before the discussion, note that if $N_1=1$, then $M_2=N_1=N_2=1$ (due to $ M_1+M_2\leq3N_2$ and $M_1\geq M_2\geq N_1\geq N_2$). As a consequence, similar to subcase (1) of case $B$, $Q_{rt}>0~(r,t=1,2)$ can not be achieved with $N_1=1$. Hence, in case $C$ we focus on $N_1\geq2$. Also note that the case $M_1+M_2=3N_2=2N_1+N_2$ can be excluded from this case as it is already addressed in case $B$.

For $Q_{11}$ and $Q_{21}$, if $M_1>N_2$, then $Q_{11}=Q_{21}>0$ for sure. If $M_1=N_2$, then $M_1=M_2=N_1=N_2$ (since $M_1\geq M_2\geq N_1\geq N_2$) and $x=3N_2-M_1-M_2=N_2=N_1$. Hence, $Q_{11}=Q_{21}=\lfloor\frac{2}{3}x\rfloor\geq1$ ($N_1\geq 2$).

For $Q_{12}$, we have $Q_{12}\geq N_2+M_2-M_1-\frac{x+1}{3}=\frac{(2(2M_2-M_1)-1)}{3}$. Since $3M_2\geq3N_2\geq M_1+M_2$, $2M_2\geq M_1$. Since $M_1+M_2=3M_2=2N_1+N_2$ is excluded from case $C$, we can get $2M_2\neq M_1$ (if $2M_2=M_1$, then $M_2=N_2=N_1$). Therefore, in this case $2M_2-M_1\geq1$. Hence, $Q_{12}>0$ for sure as $2(2M_2-M_1)\geq2$.

For $Q_{22}$, we have $Q_{22}\geq M_2+N_2-M_1-\frac{x}{3}=\frac{4M_2-2M_1}{3}>0$.

The total achievable DoF can be calculated as
\begin{eqnarray}
% \nonumber to remove numbering (before each equation)
  \frac{Q_{11}+Q_{21}+Q_{12}+Q_{22}}{2}=M_1+M_2-N_2+\frac{1}{2}\lfloor\frac{2}{3}x\rfloor\nonumber\\
  =\left\{\begin{array}{lc}
                                                \frac{2}{3}(M_1+M_2) & \mathrm{when}~x\mod 3=0 \\
                                                \frac{2}{3}(M_1+M_2)-\frac{1}{3} & \mathrm{when}~x\mod 3=1 \\
                                                 \frac{2}{3}(M_1+M_2)-\frac{1}{6} & \mathrm{when}~x\mod 3=2
                                              \end{array}\right.
\end{eqnarray}

Since the outer-bound DoF in this case is $\frac{2}{3}(M_1+M_2)$ \cite{MIMO}, we can see that the region of the gap between our achievable DoF and the outer-bound DoF is $0, \frac{1}{3}, \frac{1}{6} $ for $x\mod 3=0,~1,~2$, respectively.
\end{proof}

\begin{example} Three examples are given in this case, which are $(5,4,4,3)$ with $x\mod 3=0$, $(7,4,4,4)$ with $x\mod 3=1$, and $(7,6,6,5)$ with $x\mod 3=2$. For $(5,4,4,3)$, we get $L_1=L_2=2$ and $L_3=0$; $K_1=K_2=1$ and $K_3=2$; $J_1=J_2=1$ and $J_3=0$; $G_1=G_2=0$ and $G_3=2$. Totally 12 signals are transmitted, achieving DoF of $6$, which is equal to the outer bound. For $(7,4,4,4)$, we get $L_1=L_2=3$ and $L_3=0$; $K_1=K_2=3$ and $K_3=0$; $J_1=J_2=0$ and $J_3=1$; $G_1=G_2=0$ and $G_3=1$. Totally $14$ signals are transmitted, achieving DoF of $7$, while the outer bound is $\frac{2}{3}(M_1+M_2)=\frac{22}{3}$. For $(7,6,6,5)$, we get $L_1=L_2=2$ and $L_3=1$; $K_1=K_2=1$ and $K_3=3$; $J_1=J_2=1$ and $J_3=1$; $G_1=G_2=0$ and $G_3=4$. Totally $17$ signals are transmitted, achieving DoF of $\frac{17}{2}$, while the outer bound is $\frac{2}{3}(M_1+M_2)=\frac{26}{3}$.\end{example}

\begin{remark} Now, we justify the results of case $C$ intuitively as shown in Fig. \ref{c}.

\begin{figure}[t!]
  \begin{center}
%\vskip 10mm \hskip 5mm
        %\resizebox{4cm}{2.8cm}{\includegraphics{p1.eps}}
        \includegraphics[width=1.1\columnwidth]{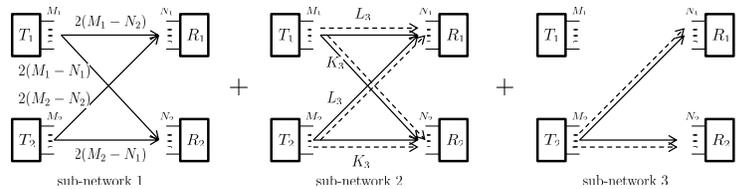}
        \caption{The three concatenated sub-networks (Case $C$)}
        \label{c}
    \end{center}
\end{figure}

At first, each link uses all the interference-free dimensions as shown in sub-network 1 of Fig. \ref{c}, no interference is caused on either receiver. After that, there are $2(M_1-N_2)+2(M_2-N_2)$ desired signals on $R_1$, and $2N_1-2(M_1-N_2)-2(M_2-N_2)=2(N_1+2N_2-M_1-M_2)$ real dimensions remaining unoccupied. On $R_2$, there are $2(M_1-N_1)+2(M_2-N_1)$ desired signals, and $2N_2-2(M_1-N_1)-2(M_2-N_1)=2(2N_1+N_2-M_1-M_2)$ unoccupied real dimensions. Note that this part is equivalent to equation (\ref{KLJ}).

Then, each link transmits some more messages with dimensions that cause interference to undesired receivers. Interference alignment should be applied to minimize the effect of interference on both receivers. Specifically, as shown in sub-network 2 of Fig. \ref{c}, each signal in link $T_1$--$R_1$ is aligned with one signal in link $T_2$--$R_1$ at receiver $R_2$ (This can be denoted by (\ref{L3c})). Each signal in link $T_2$--$R_2$ is aligned with one signal in link $T_1$--$R_2$ at receiver $R_1$ (This can be denoted by (\ref{G3})). Note that there may be more dimensions in link $T_2$--$R_1$ and link $T_1$--$R_2$ to be used ($G_3\geq K_3$ and $J_3\geq L_3$), but at this step we only pick those that are aligned with the signals in links $T_1$--$R_1$ and $T_2$--$R_2$.

As we can see, in sub-network $2$ the number of desired and interference signals on $R_1$ are $2L_3$ and $K_3$, respectively. On $R_2$, the number of desired and interference signals are $2K_3$ and $L_3$, respectively. Recall the dimensions left from sub-network $1$, we can get that
\begin{eqnarray}
% \nonumber to remove numbering (before each equation)
  2L_3+K_3 \leq 2(N_1+2N_2-M_1-M_2) \label{2L3}\\
  2K_3+L_3 \leq 2(2N_1+N_2-M_1-M_2) \label{2G3}
\end{eqnarray}

$L_3$ and $K_3$ are determined by (\ref{2L3}) and (\ref{2G3}).

When $M_1+M_2\mod 3=0$ ($x\mod3=0$), it can be calculated that $L_3=2N_2-\frac{2}{3}(M_1+M_2)=\frac{2}{3}x$ and $K_3=2N_1-\frac{2}{3}(M_1+M_2)=2(N_1-N_2)+\frac{2x}{3}$. Note that the equalities hold for both (\ref{2L3}) and (\ref{2G3}), which means that all dimensions have been occupied on both receivers. Therefore, sub-network $3$ does not exist. Hence, $J_3=L_3=\frac{2}{3}x$ and $G_3=K_3=2(N_1-N_2)+\frac{2x}{3}$. The amount of signals in each link can then be calculated by combining sub-networks $1$ and $2$. Specifically, $Q_{11}=2(M_1-N_2)+L_3=2(M_1-N_2)+\frac{2}{3}x$, $Q_{21}=2(M_1-N_1)+K_3=\frac{4}{3}M_1-\frac{2}{3}M_2$, $Q_{12}=2(M_2-N_2)+J_3=2(M_2-N_2)+\frac{2}{3}x$ and $Q_{22}=2(M_2-N_1)+G_3=\frac{4}{3}M_2-\frac{2}{3}M_1$. The achievable DoF equals $\frac{2}{3}(M_1+M_2)$.

When $M_1+M_2\mod 3=1$ ($x\mod3=2$), we can get that $L_3=2N_2-\frac{2(M_1+M_2)+1}{3}=\lfloor\frac{2}{3}x\rfloor$ and $K_3=2N_1-\frac{2(M_1+M_2)+1}{3}$. Note that the equality does not hold for (\ref{2L3}) and (\ref{2G3}), which means there are still $2(N_1+2N_2-M_1-M_2)-2L_3-K_3=1$ and $2(2N_1+N_2-M_1-M_2)-2K_3-L_3=1$ dimensions left on $R_1$ and $R_2$, respectively. In this case, either link $T_2$--$R_2$ or link $T_2$--$R_1$ (not both) can transmit one more message. If we let link $T_2$--$R_2$ transmit, then $G_3=K_3+1=2N_1+1-\frac{2(M_1+M_2)+1}{3}$. Consequently, $R_2$ receives one more desired signal and $R_1$ receives one more interference signal. All dimensions are occupied.
Hence, the length of each message block can be calculated as $Q_{11}=2(M_1-N_2)+\lfloor\frac{2}{3}x\rfloor=\frac{4M_1-2M_2-1}{3}$, $Q_{21}=2(M_1-N_1)+K_3=\frac{4M_1-2M_2-1}{3}$, $Q_{12}=2(M_2-N_2)+J_3=\frac{4M_2-2M_1-1}{3}$ and $Q_{22}=2(M_2-N_1)+G_3=\frac{4M_2-2M_1+2}{3}$. The achievable DoF equals $\frac{2}{3}(M_1+M_2)-\frac{1}{6}$.

When $M_1+M_2\mod 3=2$ ($x\mod3=1$), we can let $L_3=2N_2-\frac{2(M_1+M_2)+2}{3}=\lfloor\frac{2}{3}x\rfloor$ and $K_3=2N_1-\frac{2M_1+2M_2+2}{3}$. Note that the equality does not hold for (\ref{2L3}) and (\ref{2G3}), which means there are still $2(N_1+2N_2-M_1-M_2)-2L_3-K_3=2$ and $2(2N_1+N_2-M_1-M_2)-2K_3-L_3=2$ dimensions left on $R_1$ and $R_2$, respectively. It implies that each receiver still has two dimensions unoccupied.  In this case, link $T_2$--$R_1$ and link $T_2$--$R_2$ each transmits one more signal to $R_1$ and $R_2$, respectively, as shown in sub-network $3$ of Fig. \ref{c}. Consequently, each receiver receives one more desired signal and one more interference signal that occupy two dimensions. Also, $G_3=K_3+1=2N_1-\frac{2M_1+2M_2-1}{3}$ and $J_3=L_3+1=2N_2-\frac{2(M_1+M_2)-1}{3}$. The length of each message block can be calculated as $Q_{11}=2(M_1-N_2)+L_3=\frac{4M_1-2M_2-2}{3}$, $Q_{21}=2(M_1-N_1)+K_3=\frac{4M_1-2M_2-2}{3}$, $Q_{12}=2(M_2-N_2)+J_3=\frac{4M_2-2M_1+1}{3}$ and $Q_{22}=2(M_2-N_1)+G_3=\frac{4M_2-2M_1+1}{3}$. Hence, the achievable DoF equals $\frac{2}{3}(M_1+M_2)-\frac{1}{3}$.\end{remark}

%One may observe that there could be multiple ways to transmit signals in step 2 and step 3. The setup described in our scheme is able to guarantee $Q_{rt}>0~(r,t=1,2)$ under all types of antenna configurations.

\section {$N_1\geq N_2 \geq M_1 \geq M_2$}\label{M<N}

In this section, we discuss the cases when the number of receiver antennas are larger than the number of transmitter antennas.  We employ an precoding scheme based on interference alignment to show that exactly symmetrical result can be achieved. %as shown in Table \ref{table5}.

%\begin{table}[t!]
%\begin{center}\caption{Achievable DoF of MIMO $X$ network ($N_1\geq N_2\geq M_1\geq M_2$)}
%\renewcommand{\arraystretch}{2 }
%\centering\label{table5}
%\begin{tabular} {c |c |c |c}
%\hline \hline
% Case  & Antenna configuration & outer-bound DoF\cite{MIMO} & Achievable DoF\\
%\hline
%$A' $ & $3M_2<N_1+N_2<2M_1+M_2$ & $\frac{N_1+N_2+M_2}{2}$ & $\frac{N_1+N_2+M_2}{2}$\\
%\hline
%$B'$ & $N_1+N_2\geq2M_1+M_2$ & $M_1+M_2$ & $[M_1+M_2-1,~M_1+M_2]$\\
%\hline
%$C'$ &$ N_1+N_2\leq3M_2 $ & $\frac{2}{3}(N_1+N_2)$ & $[\frac{2}{3}(N_1+N_2)-\frac{1}{3},~\frac{2}{3}(N_1+N_2)]$\\
%\hline\hline
%\end{tabular}
%\end{center}
%\hrulefill
%\end{table}

\subsection{Design of Transmitted Signals}

To avoid confusion, we let $\mathbf{m}_{rt}'$ denote the message vectors and let $Q'_{rt}$ denote their corresponding length ($\mathbf{m}_{rt}'\in R^{Q'_{rt}\times1}$). Each message vector is divided into two groups, i.e.,
\begin{eqnarray}
% \nonumber to remove numbering (before each equation)
   \mathbf{m}_{11}'=\left[\begin{array}{cc}
                    \underbrace{({\mathbf{m}_{11}^1}')^\mathbf{T}}_{L_1'} & \underbrace{({\mathbf{m}_{11}^2}')^\mathbf{T}}_{L_2'}
                  \end{array}\right]^\mathbf{T}\nonumber\\
 \mathbf{m}_{21}'=\left[\begin{array}{cc}
                    \underbrace{({\mathbf{m}_{21}^1}')^\mathbf{T}}_{K_1'} & \underbrace{({\mathbf{m}_{21}^2}')^\mathbf{T}}_{K_2'}
                  \end{array}\right]^\mathbf{T}\nonumber \\
  \mathbf{m}_{12}'=\left[\begin{array}{cc}
                    \underbrace{({\mathbf{m}_{12}^1}')^\mathbf{T}}_{J_1'} & \underbrace{({\mathbf{m}_{12}^2}')^\mathbf{T}}_{J_2'}
                  \end{array}\right]^\mathbf{T}\nonumber\\
  \mathbf{m}_{22}'=\left[\begin{array}{cc}
                    \underbrace{({\mathbf{m}_{22}^1}')^\mathbf{T}}_{G_1'} & \underbrace{({\mathbf{m}_{22}^2}')^\mathbf{T}}_{G_2'}
                  \end{array}\right]^\mathbf{T}
\end{eqnarray}

Accordingly, we have
\begin{eqnarray}
L_1'+L_2'=Q_{11}',~ K_1'+K_2'=Q_{21}'\nonumber\\ J_1'+J_2'=Q_{12}',~ G_1'+G_2'=Q_{22}'\label{sum}
\end{eqnarray}

If the signals on each receiver are independent of each other, the total achievable DoF of the system can be calculated as
\begin{eqnarray}
D_{sum}=\frac{L_1'+L_2'+K_1'+K_2'+J_1'+J_2'+G_1'+G_2'}{2}\label{Dsum2}
\end{eqnarray}

We let ${\mathbf{m}_{11}^1}'$, ${\mathbf{m}_{21}^1}'$, ${\mathbf{m}_{12}^1}'$ and ${\mathbf{m}_{22}^1}'$ be precoded with $[\mathbf{v}_{11}^1\cdots \mathbf{v}_{11}^{L_1'}]$, $[\mathbf{v}_{21}^1\cdots \mathbf{v}_{21}^{K_1'}]$, $[\mathbf{v}_{12}^1\cdots \mathbf{v}_{12}^{J_1'}]$ and $[\mathbf{v}_{22}^1\cdots \mathbf{v}_{22}^{G_1'}]$, respectively; while ${\mathbf{m}_{11}^2}'$, ${\mathbf{m}_{21}^2}'$, ${\mathbf{m}_{12}^2}'$ and ${\mathbf{m}_{22}^2}'$ are precoded with $[\mathbf{w}_{11}^1\cdots \mathbf{w}_{11}^{L_2'}]$, $[\mathbf{w}_{21}^1\cdots \mathbf{w}_{21}^{K_2'}]$, $[\mathbf{w}_{12}^1\cdots \mathbf{w}_{12}^{J_2'}]$ and $[\mathbf{w}_{22}^1\cdots \mathbf{w}_{22}^{G_2'}]$, respectively.

Therefore, the transmitted signals can be expressed as
\begin{eqnarray}
% \nonumber to remove numbering (before each equation)
  \mathbf{x}_{11}' &=&\underbrace{\left[\begin{array}{ccc}
                     \mathbf{v}_{11}^1& \cdots & \mathbf{v}_{11}^{L_1'}
                  \end{array}\right]{\mathbf{m}_{11}^1}'}_{{\mathbf{x}_{11}^1}'}
                  +\underbrace{\left[\begin{array}{ccc}
                     \mathbf{w}_{11}^1& \cdots & \mathbf{w}_{11}^{L_2'}
                  \end{array}\right]{\mathbf{m}_{11}^2}'}_{{\mathbf{x}_{11}^2}'}\nonumber\\
   \mathbf{x}_{21}' &=&\underbrace{\left[\begin{array}{ccc}
                     \mathbf{v}_{21}^1& \cdots & \mathbf{v}_{21}^{K_1'}
                  \end{array}\right]{\mathbf{m}_{21}^1}'}_{{\mathbf{x}_{21}^1}'}
                  +\underbrace{\left[\begin{array}{ccc}
                     \mathbf{w}_{21}^1& \cdots & \mathbf{w}_{21}^{K_2'}
                  \end{array}\right]{\mathbf{m}_{21}^2}'}_{{\mathbf{x}_{21}^2}'}\nonumber\\
   \mathbf{x}_{12}' &=&\underbrace{\left[\begin{array}{ccc}
                     \mathbf{v}_{12}^1& \cdots & \mathbf{v}_{12}^{J_1'}
                  \end{array}\right]{\mathbf{m}_{12}^1}'}_{{\mathbf{x}_{12}^1}'}
                  +\underbrace{\left[\begin{array}{ccc}
                     \mathbf{w}_{12}^1& \cdots & \mathbf{w}_{12}^{J_2'}
                  \end{array}\right]{\mathbf{m}_{12}^2}'}_{{\mathbf{x}_{12}^2}'}\nonumber\\
   \mathbf{x}_{22}' &=&\underbrace{\left[\begin{array}{ccc}
                     \mathbf{v}_{22}^1& \cdots & \mathbf{v}_{22}^{G_1'}
                  \end{array}\right]{\mathbf{m}_{22}^1}'}_{{\mathbf{x}_{22}^1}'}
                  +\underbrace{\left[\begin{array}{ccc}
                     \mathbf{w}_{22}^1& \cdots & \mathbf{w}_{22}^{G_2'}
                  \end{array}\right]{\mathbf{m}_{22}^2}'}_{{\mathbf{x}_{22}^2}'}\nonumber
\end{eqnarray}

\subsection{Precoder Design and Constraints of Signal Independence}

Next, we present the design of the precoding vectors in this scenario based on the received signals.

On $R_1$, the received signals can be expressed as
\begin{eqnarray}
% \nonumber to remove numbering (before each equation)
   \mathbf{y}_1'&=& \mathbf{H}_{11}({\mathbf{x}_{11}^1}'+{\mathbf{x}_{11}^2}')+\mathbf{H}_{12}({\mathbf{x}_{12}^1}'+{\mathbf{x}_{12}^2}')\nonumber\\
               &+&\underbrace{\mathbf{H}_{11}({\mathbf{x}_{21}^1}'+{\mathbf{x}_{21}^2}')+\mathbf{H}_{12}({\mathbf{x}_{22}^1}'+{\mathbf{x}_{22}^2}')}_{\rm{interferece}}+\mathbf{z}_1
 \end{eqnarray}

With asymmetric signaling, its real signal expression can be written as (omitting the noise)
\begin{eqnarray}
% \nonumber to remove numbering (before each equation)
  {\bar{Y}_1}'=\nonumber \bar{H}_{11}[\bar{V}_{11}^1, \cdots,  \bar{V}_{11}^{L_1'}]{\mathbf{m}_{11}^1}'+\bar{H}_{11}[\bar{W}_{11}^1, \cdots, \bar{W}_{11}^{L_2'}]{\mathbf{m}_{11}^2}'\\
                  +\nonumber \bar{H}_{12}[\bar{V}_{12}^1, \cdots, \bar{V}_{12}^{J_1'}]{\mathbf{m}_{12}^1}'
                  +\bar{H}_{12}[\bar{W}_{12}^1, \cdots, \bar{W}_{12}^{J_2'}]{\mathbf{m}_{12}^2}'\\
                 +\underbrace{\bar{H}_{11}[\bar{V}_{21}^1, \cdots, \bar{V}_{21}^{K_1'}]{\mathbf{m}_{21}^1}'+\bar{H}_{11}[\bar{W}_{21}^1, \cdots, \bar{W}_{21}^{K_2'}]{\mathbf{m}_{21}^2}'}_{\mathrm{interference}}\nonumber\\
+\underbrace{\bar{H}_{12}[\bar{V}_{22}^1, \cdots, \bar{V}_{22}^{G_1'}]{\mathbf{m}_{22}^1}'
                  +\bar{H}_{12}[\bar{W}_{22}^1, \cdots, \bar{W}_{22}^{G_2'}]{\mathbf{m}_{22}^2}'}_{\mathrm{interference}}\nonumber
\end{eqnarray}
We want to align the signals in $\bar{H}_{11}[\bar{V}_{21}^1, \cdots, \bar{V}_{21}^{K_1'}]{\mathbf{m}_{21}^1}'$ and $\bar{H}_{12}[\bar{V}_{22}^1, \cdots, \bar{V}_{22}^{G_1'}]{\mathbf{m}_{22}^1}'$ one-to-one on $R_1$, which implies that $K_1'=G_1'$.
Specifically, we let $\bar{H}_{11}\bar{V}_{21}^i{m_{21}^i}'$ and $\bar{H}_{12}\bar{V}_{22}^i{m_{22}^i}'$ denote the $i$th signal of each group, respectively, and let
\begin{eqnarray}
   \bar{H}_{11}\bar{V}_{21}^i=\bar{H}_{12}\bar{V}_{22}^i=\mathbf{h}(i),~ i=1,2,\cdots, K_1'\label{h1}
\end{eqnarray}
where $\mathbf{h}(i)$ is the direction that the pair is aligned to on $R_1$.
%Note that $\bar{H}_{11}$ and $\bar{H}_{12}$ are all `tall' matrices. As a consequence, $\bar{V}_{21}^i$ and $\bar{V}_{22}^i$ can not be found for a randomly picked $\mathbf{h}(i)$.
As we can see, $\mathbf{h}(i)$, $\bar{V}_{21}^i$ and $\bar{V}_{22}^i$ can be calculated jointly as follows.
\begin{eqnarray}
% \nonumber to remove numbering (before each equation)
  \underbrace{\left[\begin{array}{ccc}
    \mathbf{I} & -\bar{H}_{11} & \mathbf{0}\\
    \mathbf{I} & \mathbf{0} & -\bar{H}_{12}
  \end{array}\right]}_{\bar{H}'}\left[\begin{array}{c}
                      \mathbf{h}(i) \\
                      \bar{V}_{21}^i \\
                      \bar{V}_{22}^i
                    \end{array}\right]
   =\mathbf{0}\label{down}
\end{eqnarray}
where $\mathbf{I}\in \mathbb{R}^{2N_1\times2N_1}$, $\bar{H}'\in \mathbb{R}^{4N_1\times(2N_1+2M_1+2M_2)}$ and $\mathbf{h}(i)\in \mathbb{R}^{2N_1\times1}$.
This implies that each pair of signals can only be aligned onto one of some certain directions ($\mathbf{h}(i)$). The amount of these directions is equal to the number of independent column vectors of the null space of $\bar{H}'$. To guarantee the independence of the signals within the same group, the aligned signal pairs must be on different directions, which means
\begin{eqnarray}
   K_1'=G_1'\leq \dim(\ker{\bar{H}'})=\max\{2M_1+2M_2-2N_1,~0\}\label{KK1}
\end{eqnarray}
where $\dim(\ker{\bar{H}'})$ denotes the number of dimensions of the kernel of $\bar{H}'$, i.e., the nullity of $\bar{H}'$.
Hence, the precoders $[\bar{V}_{21}^1, \cdots, \bar{V}_{21}^{K_1'}]$ and $[\bar{V}_{22}^1, \cdots, \bar{V}_{22}^{G_1'}]$ can be designed together.

On $R_2$, the received signals are
\begin{eqnarray}
% \nonumber to remove numbering (before each equation)
   \mathbf{y}_2'&=& \mathbf{H}_{21}({\mathbf{x}_{21}^1}'+{\mathbf{x}_{21}^2}')+\mathbf{H}_{22}({\mathbf{x}_{22}^1}'+{\mathbf{x}_{22}^2}')\nonumber\\
               &+&\underbrace{\mathbf{H}_{21}({\mathbf{x}_{11}^1}'+{\mathbf{x}_{11}^2}')+\mathbf{H}_{22}({\mathbf{x}_{12}^1}'+{\mathbf{x}_{12}^2}')}_{\rm{interferece}}+\mathbf{z}_1
 \end{eqnarray}
 With asymmetric signaling, its real signal expression can be written as (omitting the noise)
\begin{eqnarray}
% \nonumber to remove numbering (before each equation)
  {\bar{Y}_2}'=\nonumber \bar{H}_{21}[\bar{V}_{21}^1, \cdots, \bar{V}_{21}^{K_1'}]{\mathbf{m}_{21}^1}'+\bar{H}_{21}[\bar{W}_{21}^1, \cdots, \bar{W}_{21}^{K_2'}]{\mathbf{m}_{21}^2}'\\
                  +\nonumber \bar{H}_{22}[\bar{V}_{22}^1, \cdots, \bar{V}_{22}^{G_1'}]{\mathbf{m}_{22}^1}'
                  +\bar{H}_{22}[\bar{W}_{22}^1, \cdots, \bar{W}_{22}^{G_2'}]{\mathbf{m}_{22}^2}'\\
                 +\underbrace{\bar{H}_{21}[\bar{V}_{11}^1, \cdots,  \bar{V}_{11}^{L_1'}]{\mathbf{m}_{11}^1}'+\bar{H}_{21}[\bar{W}_{11}^1, \cdots, \bar{W}_{11}^{L_2'}]{\mathbf{m}_{11}^2}'}_{\mathrm{interference}}\nonumber\\
+\underbrace{\bar{H}_{22}[\bar{V}_{12}^1, \cdots, \bar{V}_{12}^{J_1'}]{\mathbf{m}_{12}^1}'
                  +\bar{H}_{22}[\bar{W}_{12}^1, \cdots, \bar{W}_{12}^{J_2'}]{\mathbf{m}_{12}^2}'}_{\mathrm{interference}}\nonumber
\end{eqnarray}
We want to align the signals in $\bar{H}_{21}[\bar{V}_{11}^1, \cdots, \bar{V}_{11}^{L_1'}]{\mathbf{m}_{11}^1}'$ and $\bar{H}_{22}[\bar{V}_{12}^1, \cdots, \bar{V}_{12}^{J_1'}]{\mathbf{m}_{12}^1}'$ one-to-one on $R_2$. Likewise, we can get
\begin{eqnarray}
% \nonumber to remove numbering (before each equation)
  \underbrace{\left[\begin{array}{ccc}
    \mathbf{I} & -\bar{H}_{21} & \mathbf{0}\\
    \mathbf{I} & \mathbf{0} & -\bar{H}_{22}
  \end{array}\right]}_{\bar{H}''}\left[\begin{array}{c}
                      \mathbf{h}'(i) \\
                      \bar{V}_{11}^i \\
                      \bar{V}_{12}^i
                    \end{array}\right]
   =\mathbf{0}\label{down2}
\end{eqnarray}
where $\mathbf{I}\in \mathbb{R}^{2N_2\times2N_2}$, $\bar{H}''\in \mathbb{R}^{4N_2\times(2N_2+2M_1+2M_2)}$ and $\mathbf{h}'(i)\in \mathbb{R}^{2N_2\times1}$.
Accordingly, we have
\begin{eqnarray}
   \bar{H}_{21}\bar{V}_{11}^i=\bar{H}_{22}\bar{V}_{12}^i=\mathbf{h}'(i).~ i=1,2,\cdots, L_1'~~~~~~~~~~~~~~~~\label{h2}\\
   L_1'=J_1'\leq \dim(\ker{\bar{H}''})=\max\{2M_1+2M_2-2N_2,~0\}\label{LL1}
\end{eqnarray}
Hence, the precoders $[\bar{V}_{11}^1, \cdots, \bar{V}_{11}^{L_1'}]$ and $[\bar{V}_{12}^1, \cdots, \bar{V}_{12}^{J_1'}]$ are determined.

Next, we shall design other four groups of precoders. The design principle is to guarantee the signals on each receiver to be independent of each other.

We first examine the received signals at $R_1$. The real version of the signals from transmitter $T_1$ are $\bar{H}_{11}\bar{V}_{21}\mathbf{m}_{21}^{1'}$,
$\bar{H}_{11}\bar{V}_{11}\mathbf{m}_{11}^{1'}$, $\bar{H}_{11}\bar{W}_{21}\mathbf{m}_{21}^{2'}$ and $\bar{H}_{11}\bar{W}_{11}\mathbf{m}_{11}^{2'}$,
which can be expressed as $\bar{H}_{11}\left[\begin{array}{cccc}
                                                                         \bar{V}_{21} & \bar{V}_{11}& \bar{W}_{21} & \bar{W}_{11}
                                                                       \end{array}\right]$.
                                                                        Note that $[\bar{V}_{21}^1, \cdots, \bar{V}_{21}^{K_1'}]$
and $[\bar{V}_{11}^1, \cdots, \bar{V}_{11}^{L_1'}]$ are designed according to (\ref{down}) and (\ref{down2}), respectively. Therefore,
$\left[\begin{array}{cc}
                                                                         \bar{V}_{21} & \bar{V}_{11}
                                                                       \end{array}\right]$
has full column rank almost for sure due to the channel randomness. Then, we design $[\bar{W}_{21}^1, \cdots, \bar{W}_{21}^{K_2'}]$ and $[\bar{W}_{11}^1, \cdots, \bar{W}_{11}^{L_2'}]$ so that $\left[\begin{array}{cccc}
\bar{V}_{21} & \bar{V}_{11}& \bar{W}_{21} & \bar{W}_{11}
 \end{array}\right]\in\mathbb{R}^{2M_1\times(K_1'+K_2'+L_1'+L_2')}$ has full column rank. As we can see, the precoders exist as long as the number of signals are no more than the number of real dimensions of $T_1$, i.e.,
\begin{eqnarray}
% \nonumber to remove numbering (before each equation)
  L_1'+L_2'+K_1'+K_2'= Q'_{11}+Q'_{21}\leq2M_1 \label{transmitter1}
  \end{eqnarray}
Since $M_1\leq N_1$, the received signals from $T_1$, $\bar{H}_{11}\left[\begin{array}{cccc}
                                                                         \bar{V}_{21} & \bar{V}_{11}& \bar{W}_{21} & \bar{W}_{11}
                                                                       \end{array}\right]\in\mathbb{R}^{2N_1\times(K_1'+K_2'+L_1'+L_2')}$, also has full column rank for sure.
Further, the real version of the signals from transmitter $T_2$ can be expressed as $\bar{H}_{12}\left[\begin{array}{cccc}
                                                                         \bar{V}_{12} & \bar{V}_{22}& \bar{W}_{12} & \bar{W}_{22}
                                                                       \end{array}\right]$. Similarly, $\bar{W}_{12}$ and $\bar{W}_{22}$ can be found to guarantee the full column rank as long as
\begin{eqnarray}
% \nonumber to remove numbering (before each equation)
 J_1'+J_2'+G_1'+G_2'=Q'_{12}+Q'_{22}\leq2M_2 \label{transmitter2}
\end{eqnarray}
Finally, the total received signals on $R_1$ can be expressed in real version as \\$\left[
 \begin{array}{cc}
 \bar{H}_{11}(\bar{V}_{11}~\bar{W}_{11}~\bar{V}_{21}~\bar{W}_{21}) &  \bar{H}_{12}(\bar{V}_{12}~\bar{W}_{12}~\bar{W}_{22}) \\
 \end{array}
 \right]$ ($\bar{H}_{12}\bar{V}_{22}$ is aligned with $\bar{H}_{11}\bar{V}_{21}$). Based on above discussion and the property of random channels, the full column rank of the matrix can be guaranteed as long as
 \begin{eqnarray}
L_1'+L_2'+J_1'+J_2'+K_1'+K_2'+G_2'\leq 2N_1\label{N1}
\end{eqnarray}

Next, we examine the received signals at $R_2$. The real version of the signals from transmitters $T_1$ and $T_2$ can be expressed as $\bar{H}_{21}\left[\begin{array}{cccc}
                                                                         \bar{V}_{21} & \bar{V}_{11}& \bar{W}_{21} & \bar{W}_{11}
                                                                       \end{array}\right]$ and $\bar{H}_{22}\left[\begin{array}{cccc}
                                                                         \bar{V}_{12} & \bar{V}_{22}& \bar{W}_{12} & \bar{W}_{22}
                                                                       \end{array}\right]$, respectively. They both have full column rank if (\ref{transmitter1}) and  (\ref{transmitter2}) are satisfied.
Then, the total received signals can be expressed in real version as $\left[
 \begin{array}{cc}
 \bar{H}_{21}(\bar{V}_{21}~\bar{W}_{21}~\bar{V}_{11}~\bar{W}_{11}) &  \bar{H}_{22}(\bar{V}_{22}~\bar{W}_{22}~\bar{W}_{12}) \\
 \end{array}
 \right]$ ($\bar{H}_{22}\bar{V}_{12}$ is aligned with $\bar{H}_{21}\bar{V}_{11}$). The full column rank can be guaranteed if
 \begin{eqnarray}
K_1'+K_2'+G_1'+G_2'+L_1'+L_2'+J_2'\leq 2N_2\label{N2}
\end{eqnarray}

Therefore, the constraints of signal independence are (\ref{KK1}) and (\ref{LL1})-(\ref{N2}).

\subsection{Achievable DoF}

Next, we investigate the achievable DoF when $M_t\leq N_r$. According to the antenna configurations, one can note that for each case in Tables \ref{table2}, \ref{table3} and \ref{table4}, there is a symmetrical one in this scenario. By swapping $M_1$ and $N_1$, $M_2$ and $N_2$, and letting $Q'_{rt}=Q_{tr}$ (in Table \ref{table2}, \ref{table3}, \ref{table4}), we can get Tables \ref{table6}, \ref{table7} and \ref{table8}. Next, we prove that the results in Tables \ref{table6}, \ref{table7} and \ref{table8} satisfy all the constraints of independence and are achievable with our scheme.

\begin{theorem} In $2\times2$ MIMO X network with $M_t$ antennas at transmitter $t$ and $N_r$ antennas at receiver $r$, when $N_1\geq N_2\geq M_1\geq M_2$ and $3M_2<N_1+N_2<2M_1+M_2$, the total achievable DoF of the network is $\frac{N_1+N_2+M_2}{2}$ (the outer-bound). The length of each message block is shown in Table \ref{table6}.\end{theorem}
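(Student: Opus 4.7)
The plan is to formulate a linear optimization over the eight block lengths defined in Section \ref{M<N}-A, maximize their sum, and exhibit an assignment that attains the outer bound $\frac{N_1+N_2+M_2}{2}$. The feasibility region, as derived in Section \ref{M<N}-B, consists of the pairwise alignment identities $K_1'=G_1'\leq 2(M_1+M_2-N_1)$ and $L_1'=J_1'\leq 2(M_1+M_2-N_2)$, the transmitter-side bounds $L_1'+L_2'+K_1'+K_2'\leq 2M_1$ and $J_1'+J_2'+G_1'+G_2'\leq 2M_2$, and the two receiver-side independence bounds (\ref{N1}) and (\ref{N2}).

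Guided by the symmetry $Q'_{rt}=Q_{tr}$ between Table \ref{table6} and Table \ref{table2}, I would propose $L_1'=J_1'=N_1-N_2+M_2$, $K_1'=G_1'=N_2-N_1+M_2$, $L_2'=G_2'=J_2'=0$, and $K_2'=N_1+N_2-3M_2$. This yields $Q'_{11}=Q'_{12}=N_1-N_2+M_2$, $Q'_{21}=2(N_2-M_2)$, and $Q'_{22}=N_2-N_1+M_2$, summing to $N_1+N_2+M_2$ and thereby matching the outer bound.

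Verification then proceeds in three stages. First, both alignment bounds and the transmitter bound at $T_1$ reduce to the hypothesis $N_1+N_2<2M_1+M_2$, while the transmitter bound at $T_2$ holds with equality ($2M_2=2M_2$). Second, the two receiver-side bounds both hold with equality, consuming exactly the $2N_r$ real dimensions at receiver $R_r$. Third, positivity of every block length is confirmed: $K_2'>0$ by $3M_2<N_1+N_2$; $Q'_{22}=N_2-N_1+M_2>0$ by combining $M_1\leq N_2$ with $2M_1+M_2>N_1+N_2$ to get $N_2+M_2>N_1$; and $Q'_{21}=2(N_2-M_2)>0$ because $N_2=M_2$ would force $M_1=M_2=N_2$ (since $N_2\geq M_1\geq M_2=N_2$), which collapses the Case $A'$ window $3M_2<N_1+N_2<2M_1+M_2=3N_2$ to the empty set.

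The main obstacle is the bookkeeping rather than any individual inequality: the aligned pairs $(L_1',J_1')$ and $(K_1',G_1')$ are forced once the outer-bound target is imposed, and the residual $N_1+N_2-3M_2$ dimensions must then be absorbed entirely into the single parameter $K_2'$ (with $L_2'=J_2'=G_2'=0$) without violating either the transmitter-$T_1$ bound or the receiver-$R_2$ equality. An intuitive three-sub-network decomposition dual to Remark 1 of Case $A$ can serve as a sanity check, but the formal proof reduces to the constraint verification sketched above.
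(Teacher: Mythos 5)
Your proposal is correct and follows essentially the same route as the paper: the identical assignment $L_1'=J_1'=N_1-N_2+M_2$, $K_1'=G_1'=N_2-N_1+M_2$, $K_2'=N_1+N_2-3M_2$, $L_2'=J_2'=G_2'=0$, with the alignment, transmitter, and receiver constraints all reducing to the hypothesis $3M_2<N_1+N_2<2M_1+M_2$. Your positivity arguments are phrased slightly differently (e.g., ruling out $N_2=M_2$ by contradiction rather than deducing $M_2<M_1\leq N_2$ directly from $3M_2<2M_1+M_2$), and you verify the receiver-side equalities more explicitly than the paper does, but these are cosmetic differences.
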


\begin{table*}[t!]
\begin{center}\caption{Length of Message Vectors in Case $A'$ ($3M_2<N_1+N_2<2M_1+M_2$)}
\renewcommand{\arraystretch}{2 }
\centering\label{table6}
\begin{tabular} {c |c |c |c |c}
\hline
$Q'_{11}$  & $Q'_{12}$ & $Q'_{21}$ & $Q'_{22}$ & Achievable DoF\\
 \hline
 $N_1-N_2+M_2$ & $N_1-N_2+M_2$ &$2(N_2-M_2)$ & $N_2-N_1+M_2$ & $\frac{N_1+N_2+M_2}{2}$\\
\hline\hline
\end{tabular}
\end{center}
%\hrulefill
\end{table*}

\begin{proof} Note that this case is symmetrical to Case $A$ of Section \ref{A}. Therefore, we swap $N_r$ and $M_r$ and let $Q_{rt}=Q'_{tr}$ in Table \ref{table2}. As a result, the length of each message block in this case can be written as
\begin{eqnarray}
% \nonumber to remove numbering (before each equation)
  Q'_{11} &=& L_1'+L_2'=N_1-N_2+M_2 \nonumber\\
  Q'_{12} &=& J_1'+J_2'=N_1-N_2+M_2 \nonumber\\
  Q'_{21} &=& K_1'+K_2'=2(N_2-M_2) \nonumber\\
  Q'_{22} &=& G_1'+G_2'=N_2+M_2-N_1\label{Q'}
\end{eqnarray}

Since $M_1>M_2$ ($3M_2<N_1+N_2<2M_1+M_2$), we have $N_2\geq M_1>M_2$. Hence, $ Q'_{21}>0$. Since $N_2+M_2\geq M_1+M_2> N_1$, we have $Q'_{22}>0$ for sure.

Next, we show that based on our proposed scheme, a proper value for each parameter can be found in (\ref{Q'}) while satisfying all the constraints of independence (\ref{KK1}) and (\ref{LL1})-(\ref{N2}).

First of all, it can be proved that (\ref{transmitter1}) and (\ref{transmitter2}) are satisfied by (\ref{Q'}).

Then, since $M_1+M_2>N_1+N_2-M_1\geq N_1\geq N_2$, based on (\ref{KK1}) and (\ref{LL1}) we have
\begin{eqnarray}
% \nonumber to remove numbering (before each equation)
 G_1'=K_1'\leq\dim(\ker{\bar{H}'})=2M_1+2M_2-2N_1\nonumber\\
   L_1'=J_1'\leq\dim(\ker{\bar{H}''})= 2M_1+2M_2-2N_2\nonumber
\end{eqnarray}

Accordingly, we can choose
\begin{eqnarray}
% \nonumber to remove numbering (before each equation)
G_1'&=&N_2-N_1+M_2,~\mathrm{and}~G_2'=0\nonumber\\
 K_1'&=& N_2-N_1+M_2,~\mathrm{and}~ K_2'=N_1+N_2-3M_2\nonumber\\
 L_1'&=& J_1'=N_1-N_2+M_2,~\mathrm{and}~L_2'=J_2'=0\nonumber
\end{eqnarray}

Since $N_1+N_2<2M_1+M_2$,  it can be proved that $K_1'=N_2-N_1+M_2\leq2M_1+2M_2-2N_1$ and $L_1'=N_1-N_2+M_2\leq2M_1+2M_2-2N_2$. Hence, (\ref{KK1}) and (\ref{LL1}) are satisfied.

It can be also proved that the constraints (\ref{N1}) and (\ref{N2}) are satisfied as well. Therefore, the DoF can be calculated with (\ref{Dsum2}) and is equal to $\frac{N_1+N_2+M_2}{2}$.
\end{proof}

\begin{theorem} In $2\times2$ MIMO X network with $M_t$ antennas at transmitter $t$ and $N_r$ antennas at receiver $r$, when $N_1\geq N_2\geq M_1\geq M_2$ and $N_1+N_2\geq2M_1+M_2$, the achievable DoF equals\\

~~~~~~~~~~~$\left\{\begin{array}{cc}
           M_1+M_2-1 & \mathrm{if}~ N_1\geq N_2=M_1=M_2\\
 M_1+M_2-\frac{1}{2} & \mathrm{if}~ N_1\geq N_2=M_1>M_2\\
     M_1+M_2  & \mathrm{if} ~N_1\geq N_2>M_1\geq M_2
   \end{array}\right.$\\

The length of each message block in different cases is shown in Table \ref{table7}.\end{theorem}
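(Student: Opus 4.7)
The plan is to mirror the proof of Theorem 4, exploiting the duality between Cases $B$ and $B'$. I would first apply the substitution $M_r\leftrightarrow N_r$ and $Q'_{rt}=Q_{tr}$ to Table \ref{table3} to obtain the candidate message-block lengths of Table \ref{table7}; summing the four lengths and dividing by two then reproduces the three claimed DoF values $M_1+M_2-1$, $M_1+M_2-\frac{1}{2}$ and $M_1+M_2$ in subcases (1), (2) and (3) respectively.

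The substance of the proof is to realize these candidate lengths with the scheme of Section \ref{M<N}. For each subcase I would pick an explicit split of each $Q'_{rt}$ into sub-blocks $(L'_1,L'_2)$, $(J'_1,J'_2)$, $(K'_1,K'_2)$, $(G'_1,G'_2)$, and then verify the six independence constraints: the alignment bounds (\ref{KK1}) and (\ref{LL1}), the transmitter bounds (\ref{transmitter1}) and (\ref{transmitter2}), and the receiver bounds (\ref{N1}) and (\ref{N2}). The general recipe is to first compute $\dim(\ker\bar{H}')=\max\{2(M_1+M_2)-2N_1,0\}$ and $\dim(\ker\bar{H}'')=\max\{2(M_1+M_2)-2N_2,0\}$ so as to fix the admissible ranges of the aligned cross-pairs $K'_1=G'_1$ and $L'_1=J'_1$; then allocate the remaining symbols to the non-aligned $\bar{W}$-precoders; and finally choose the free sub-block sizes so that the binding receiver bound saturates at $2N_1$ or $2N_2$.

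The hard part will be subcase (1): here $N_2=M_1=M_2$ combined with the hypothesis $N_1+N_2\geq 2M_1+M_2$ forces $N_1\geq 2M_1$, so $\dim(\ker\bar{H}')=0$ and no $\bar{V}$-alignment is available at $R_1$. This is precisely why the outer bound $M_1+M_2$ cannot be reached and a $1$-DoF penalty must be paid: I would set $K'_1=G'_1=0$ and route a single non-aligned cross symbol through each of $K'_2=G'_2=1$, while the remaining $2M_2-2$ symbols of each of $\mathbf{m}'_{11}$ and $\mathbf{m}'_{12}$ are loaded onto aligned pairs at $R_2$ through the just-large-enough kernel $\dim(\ker\bar{H}'')=2M_2$, saturating (\ref{N2}) with equality at $2N_2=2M_1$ while leaving comfortable slack in (\ref{N1}). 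Subcase (2), with $M_2<N_2=M_1$, is the analogous tight verification in which the additional room $2(M_1-M_2)$ absorbs $Q'_{21}=2(M_1-M_2)$ unaligned symbols on $T_2$--$R_1$, yet the single forced $G'_2\geq 1$ still costs a half DoF. Subcase (3), with $N_2>M_1$, enjoys enough slack that the outer bound is attained: the choice of aligned versus non-aligned sub-blocks follows exactly the mirror of subcase (3) of Case $B$, and the verification of (\ref{KK1})--(\ref{N2}) is routine. I expect subcase (1) to be the main obstacle, since (\ref{N2}) and (\ref{KK1}) become simultaneously binding and every free parameter is pinned down.
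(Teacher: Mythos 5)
Your proposal matches the paper's own proof in both structure and detail: it obtains the candidate block lengths of Table \ref{table7} by the $M_r\leftrightarrow N_r$, $Q'_{rt}=Q_{tr}$ symmetry with Case $B$, computes $\dim(\ker\bar{H}')$ and $\dim(\ker\bar{H}'')$ to decide which cross-pairs can be aligned, and then verifies (\ref{KK1})--(\ref{N2}) with the same explicit sub-block choices the paper uses (in particular $K_1'=G_1'=0$, $K_2'=G_2'=1$, $L_1'=J_1'=2M_2-2$, $L_2'=J_2'=0$ in subcase (1), saturating (\ref{N2}) at $2N_2=2M_1$). Two minor points of divergence worth fixing: in subcase (2) the $2(M_1-M_2)$ unaligned symbols are carried by $\mathbf{m}_{21}'$ on link $T_1$--$R_2$, not $T_2$--$R_1$; and in subcase (3) the paper first reduces to the boundary $N_1+N_2=2M_1+M_2$ and then splits on whether $M_1>2(N_2-M_2)$, a case distinction your ``routine verification'' should carry out explicitly.
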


\begin{table*}
[t!]
\begin{center}\caption{Length of Message Vectors in Case $B'$ ($N_1+N_2\geq2M_1+M_2$)}
\renewcommand{\arraystretch}{2 }
\centering\label{table7}
\begin{tabular} {c |c |c |c| c| c}
\hline \hline
 & $Q'_{11}$  & $Q'_{12}$ & $Q'_{21}$ & $Q'_{22}$ & Achievable DoF\\
\hline
(1)$N_1\geq N_2=M_1=M_2$ &$ 2M_2-2 $ & $ 2M_2-2 $& $1$ & $1$ & $M_1+M_2-1$\\
\hline
(2)$N_1\geq N_2=M_1>M_2 $ &  $2M_2-1$ &$ 2M_2-1$ &$2(N_2-M_2)$ & $1$ & $M_1+M_2-\frac{1}{2}$\\
\hline
(3)$N_1\geq N_2>M_1\geq M_2$ & $2M_1-Q'_{21}$   & $2M_2-Q'_{22}$ & $\min\{2(N_2-M_2)~,M_1\}$ &$\min\{2(N_2-M_1)~,M_2\}$ & $M_1+M_2$\\
\hline\hline
\end{tabular}
\end{center}
\hrulefill
\end{table*}

\begin{proof} When $N_2=M_1\geq M_2$ ((1) and (2) of Table \ref{table7}), the cases are symmetrical to (1) and (2) of case $B$, respecitvely. In addition, since $N_1+N_2\geq2M_1+M_2$, we have $N_1\geq M_1+M_2$. According to (\ref{KK1}) and (\ref{LL1}), we can get $ \dim(\ker{\bar{H}'})=0$ and $\dim(\ker{\bar{H}''})=2M_2$.

Also, besides the independence constraints, $Q_{rt}'\geq 1$ needs to be taken into consideration as well.

Therefore, for $N_1\geq N_2=M_1=M_2 $ ((1) of Table \ref{table7}), we can choose
\begin{eqnarray}
% \nonumber to remove numbering (before each equation)
  \begin{array}{ll}
    K_1'= 0,~K_2'=1 &  \mathrm{and}~Q_{21}'=K_1'+K_2'=1\\
    G_1'= 0,~G_2'=1 & \mathrm{and}~ Q_{22}'=G_1'+G_2'=1 \\
    L_1'= 2M_2-2,~L_2'=0 & \mathrm{and}~Q_{11}'=L_1'+L_2'=2M_2-2 \\
    J_1'= 2M_2-2,~J_2'=0 & \mathrm{and}~Q_{12}'=J_1'+J_2'=2M_2-2
  \end{array}\nonumber
\end{eqnarray}

Note that $Q_{rt}'>0~(r,t=1,2)$ can only be achieved when $N_2=M_1=M_2>1$.

For $N_1\geq N_2=M_1>M_2$ ((2) of Table \ref{table7}), we can choose
\begin{eqnarray}
% \nonumber to remove numbering (before each equation)
  \begin{array}{ll}
    K_1'= 0,~K_2'=2(N_2-M_2) &  \mathrm{and}~Q_{21}'=K_1'+K_2'=2(N_2-M_2) \\
    G_1'= 0,~G_2'=1 & \mathrm{and}~ Q_{22}'=G_1'+G_2'=1 \\
    L_1'= 2M_2-1,~L_2'=0 & \mathrm{and}~Q_{11}'=L_1'+L_2'=2M_2-1 \\
    J_1'= 2M_2-1,~J_2'=0 & \mathrm{and}~Q_{12}'=J_1'+J_2'=2M_2-1
  \end{array}\nonumber
\end{eqnarray}

It can be proved that all the constraints ((\ref{KK1}) and (\ref{LL1})-(\ref{N2})) are satisfied with above settings.

When $N_2>M_1$ ((3) of Table \ref{table7}), the outer-bound DoF can be achieved. Note that in this scenario, for a certain ($M_1,~M_2$), the outer-bound is fixed as $M_1+M_2$ (unrelated to $N_1$, $N_2$). Given a fixed transmitter antenna configuration ($M_1,~M_2$), the number of receiver antennas would satisfy either $N_1+N_2=2M_1+M_2$ or $N_1+N_2>2M_1+M_2$. If the one with $N_1+N_2=2M_1+M_2$ can achieve the outer-bound, it is obviously that those with $N_1+N_2>2M_1+M_2$ can also achieve the same outer-bound for sure.
Therefore, in this case we only need to show that the outer-bound can be achieved when $N_1+N_2=2M_1+M_2$.

Firstly, based on its symmetrical case ((3) of Table \ref{table3}), we can get the length of each message block as
\begin{eqnarray}
% \nonumber to remove numbering (before each equation)
  Q'_{11} &=& L_1'+L_2'=2M_1-Q'_{21} \nonumber\\
  Q'_{12} &=& J_1'+J_2'=2M_2-Q'_{22} \nonumber\\
  Q'_{21} &=& K_1'+K_2'=\min\{2(N_2-M_2),~M_1\} \nonumber\\
  Q'_{22} &=& G_1'+G_2'=\min\{2(N_2-M_1)~,M_2\}=2(N_2-M_1)\nonumber
\end{eqnarray}

Note that since $N_1+N_2=2M_1+M_2$, $2(N_2-M_1)\leq N_1+N_2-2M_1=M_2$. It can be proved that (\ref{transmitter1}) and (\ref{transmitter2}) are satisfied.

Then, based on (\ref{KK1}) and (\ref{LL1}) we have $ \dim(\ker{\bar{H}'})=2(N_2-M_1)$ and $\dim(\ker{\bar{H}''})=2(N_1-M_1)$. Accordingly, we can choose
\begin{eqnarray}
% \nonumber to remove numbering (before each equation)
  \left\{
    \begin{array}{cl}
       K_1'= 2(N_2-M_1), & \mathrm{and}~K_2'=2(M_1-M_2) \\
       G_1'= 2(N_2-M_1), & \mathrm{and}~G_2'=0 \\
       L_1'= 2(N_1-M_1), & \mathrm{and}~L_2'=0 \\
       J_1'= 2(N_1-M_1), & \mathrm{and}~J_2'=0 \\
    \end{array}
  \right.\nonumber
\end{eqnarray}
if $M_1>2(N_2-M_2)$, or
\begin{eqnarray}
% \nonumber to remove numbering (before each equation)
  \left\{
    \begin{array}{ll}
       K_1'= 2(N_2-M_1), & \mathrm{and}~K_2'=3M_1-2N_2 \\
       G_1'= 2(N_2-M_1), & \mathrm{and}~G_2'=0 \\
       L_1'= M_2, & \mathrm{and}~L_2'=M_1-M_2\\
       J_1'= M_2, & \mathrm{and}~J_2'=N_1-N_2 \\
    \end{array}
  \right.\nonumber
\end{eqnarray}
if $M_1\leq2(N_2-M_2)$.

Note that $K_2'=3M_1-2N_2\geq0$ as $3M_1\geq 2M_1+M_2=N_1+N_2\geq2N_2$. It can be proved that (\ref{KK1}), (\ref{LL1}), (\ref{N1}) and (\ref{N2}) are all satisfied.

Finally, the achievable DoF can be calculated with (\ref{Dsum2}) and the result is equal to the outer-bound.
\end{proof}

\begin{theorem} In $2\times2$ MIMO X network with $M_t$ antennas at transmitter $t$ and $N_r$ antennas at receiver $r$, when $N_1\geq N_2\geq M_1\geq M_2$ and $N_1+N_2\leq3M_2$, the achievable DoF equals

~~~~~~~~~~~~~~~~~~$\left\{\begin{array}{cc}
      \frac{2(N_1+N_2)}{3} & \mathrm{if}~ x'\mod 3=0\\
     \frac{2(N_1+N_2)-1}{3} & \mathrm{if}~ x'\mod 3=1\\
     \frac{4(N_1+N_2)-1}{6}  &\mathrm{if}~ x'\mod 3=2
   \end{array}\right.$\\
where $x'=3M_2-N_1-N_2$.
The length of each message block is shown in Table \ref{table8}.\end{theorem}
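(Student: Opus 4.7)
The plan is to reduce Theorem~6 to Theorem~4 by exploiting the receiver--transmitter symmetry already exploited in Theorems~4 and~5. Concretely, I would swap $M_r\leftrightarrow N_r$ and relabel $Q'_{rt}=Q_{tr}$ in Table~\ref{table4}, so that (with $x'=3M_2-N_1-N_2$) the candidate message-block lengths become
\begin{eqnarray}
Q'_{11}&=&2(N_1-M_2)+\lfloor 2x'/3\rfloor,\nonumber\\
Q'_{12}&=&2(N_1-M_2)+\lfloor 2x'/3\rfloor,\nonumber\\
Q'_{21}&=&N_2+M_2-N_1-\lceil (x'-1)/3\rceil,\nonumber\\
Q'_{22}&=&N_2+M_2-N_1-\lfloor x'/3\rfloor.\nonumber
\end{eqnarray}
Plugging these into (\ref{Dsum2}) immediately yields the stated DoF for each residue of $x'\bmod 3$, so the substantive task is to show that the scheme of Section~\ref{M<N} actually realizes these block lengths, i.e.\ that the independence constraints (\ref{KK1}), (\ref{LL1}), (\ref{transmitter1})--(\ref{N2}) can all be met and that every $Q'_{rt}>0$.

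Next, I would split the above block lengths into $L_1',L_2',K_1',K_2',J_1',J_2',G_1',G_2'$ guided by which pieces need to be aligned at which receiver. Since by (\ref{KK1}) at most $\dim\ker\bar H'=2(M_1+M_2-N_1)$ signal pairs can be aligned at $R_1$, and by (\ref{LL1}) at most $\dim\ker\bar H''=2(M_1+M_2-N_2)$ pairs can be aligned at $R_2$, I would set $K_1'=G_1'$ to carry the mutually aligned interference on $R_1$ and $L_1'=J_1'$ to carry the aligned interference on $R_2$, loading as many symbols as possible into these aligned slots (the analogues of $K_3,G_3,L_3,J_3$ in Case~$C$) and placing the remainder into the unaligned $\bar W$-precoded slots. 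The hypothesis $N_1+N_2\leq 3M_2\leq M_1+2M_2$ together with $N_r\geq M_t$ will make $\dim\ker\bar H',\dim\ker\bar H''$ large enough to absorb these aligned portions; this is where the residue of $x'\bmod 3$ enters, exactly as in the proof of Theorem~4.

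Once the partition is fixed, I would verify the six constraints in a uniform way: (\ref{transmitter1}) and (\ref{transmitter2}) follow from $Q'_{11}+Q'_{21}\leq 2M_1$ and $Q'_{12}+Q'_{22}\leq 2M_2$, which reduce to elementary inequalities in $M_1,M_2,N_1,N_2,x'$; the receiver bounds (\ref{N1}) and (\ref{N2}) follow from $\frac{1}{2}(Q'_{11}+Q'_{12}+Q'_{21}+Q'_{22})\leq\min\{N_1,N_2\}$ up to one aligned block, which again becomes an arithmetic check using $x'\geq 0$ and $N_1\geq N_2\geq M_1\geq M_2$; and (\ref{KK1}), (\ref{LL1}) are satisfied by construction of the aligned portions. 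Positivity of every $Q'_{rt}$ is the remaining item and, as in Case~$C$, I would handle the degenerate subcases $N_2=M_1=M_2$ and $N_1+N_2=3M_2=2M_1+M_2$ (which overlap Case~$B'$) separately and exclude them here.

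The main obstacle, I expect, will be the alignment-feasibility step: unlike Case~$C$, where all precoders live in channel null spaces that always have the dimensions granted by $M_t\geq N_r$, here the aligned pairs must lie in $\ker\bar H'$ and $\ker\bar H''$, whose dimensions shrink when $N_r$ grows. Showing that the choice of $K_1'=G_1'$ and $L_1'=J_1'$ dictated by the symmetric counterpart of Case~$C$ still fits inside these null spaces for each residue $x'\bmod 3$, while simultaneously keeping $Q'_{rt}>0$ and satisfying (\ref{N1})--(\ref{N2}), is where the bookkeeping gets delicate; everything else is routine once that balance is struck.
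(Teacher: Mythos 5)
Your proposal follows essentially the same route as the paper's proof: it obtains the block lengths by swapping $M_r\leftrightarrow N_r$ and setting $Q'_{rt}=Q_{tr}$ in Table \ref{table4}, splits them into aligned portions $K_1'=G_1'$ and $L_1'=J_1'$ living in $\ker\bar{H}'$ and $\ker\bar{H}''$ with the remainder in the $\bar{W}$-precoded slots, and then verifies (\ref{KK1}), (\ref{LL1}), (\ref{transmitter1})--(\ref{N2}) and positivity by residue of $x'\bmod 3$ --- which is exactly how the paper proceeds. You also correctly isolate the one genuinely delicate point (that the aligned blocks must fit inside the kernels, whose dimensions $2(M_1+M_2-N_1)$ and $2(M_1+M_2-N_2)$ are what replace the null-space dimensions of Case $C$), so the plan is sound and matches the paper.
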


\begin{proof} Based on its symmetrical case (case $C$ in Section \ref{C}), the length of each message block can be written as
\begin{eqnarray}
% \nonumber to remove numbering (before each equation)
  Q_{11}'&=&L_1'+L_2'=2(N_1-M_2)+\lfloor\frac{2}{3}x'\rfloor\nonumber\\
  Q_{21}'&=&K_1'+K_2'=N_2+M_2-N_1-\lceil\frac{x'-1}{3}\rceil\nonumber\\
  Q_{12}'&=&J_1'+J_2'=2(N_1-M_2)+\lfloor\frac{2}{3}x'\rfloor\nonumber\\
  Q_{22}'&=&G_1'+G_2'=N_2+M_2-N_1-\lfloor\frac{x'}{3}\rfloor\nonumber
\end{eqnarray}

For the signals transmitted from $T_1$, since $Q_{11}'+Q_{21}'\leq N_1+N_2-M_2+\frac{x'}{3}=\frac{2}{3}(N_1+N_2)\leq 2M_2\leq 2M_1$, (\ref{transmitter1}) are satisfied.

For the signals transmitted from $T_2$, when $x'\mod 3=0~\mathrm{or}~1$, $Q_{12}'+Q_{22}'\leq N_1+N_2-M_2+\frac{x'}{3}=\frac{2}{3}(N_1+N_2)\leq 2M_2$. When $x'\mod 3=2$, $N_1+N_2<3M_2$, and $Q_{12}'+Q_{22}'= N_1+N_2-M_2+\frac{x'+1}{3}= \frac{2}{3}(N_1+N_2+1)\leq 2M_2$. Hence, (\ref{transmitter2}) are satisfied.

The proof of $Q_{rt}'>0~(r,t=1,2)$ is similar to that of Case $C$.

Then, according to (\ref{KK1}) and (\ref{LL1}), we have $\dim(\ker{\bar{H}'})=2M_1+2M_2-2N_1$ and $\dim(\ker{\bar{H}''})=2M_1+2M_2-2N_2$. The parameters can be chosen as follows.

When $x'\mod3=0$,
\begin{eqnarray}
% \nonumber to remove numbering (before each equation)
\left\{
  \begin{array}{l}
    K_1'=2(N_2-M_2)+\frac{2x'}{3}=N_2-N_1+M_2-\frac{x'}{3} \\
    K_2'=0\\
    G_1'= N_2-N_1+M_2-\frac{x'}{3}, \mathrm{and}~G_2'=0 \\
    L_1'=2(N_1-M_2)+\frac{2x'}{3}=N_1-N_2+M_2-\frac{x'}{3} \\
    L_2'=0\\
    J_1'= N_1-N_2+M_2-\frac{x'}{3}, \mathrm{and}~J_2'=0 \\
  \end{array}
\right.\nonumber
\end{eqnarray}

When $x'\mod3=1$,
\begin{eqnarray}
% \nonumber to remove numbering (before each equation)
\left\{
  \begin{array}{l}
    K_1'=2(N_2-M_2)+\lfloor\frac{2x'}{3}\rfloor+1=N_2-N_1+M_2-\lfloor\frac{x'}{3}\rfloor\\
    K_2'=0 \\
    G_1'= N_2-N_1+M_2-\lfloor\frac{x'}{3}\rfloor, \mathrm{and}~G_2'=0 \\
    L_1'=2(N_1-M_2)+\lfloor\frac{2x'}{3}\rfloor=N_1-N_2+M_2-\lfloor\frac{x'}{3}\rfloor-1\\
    L_2'=0 \\
    J_1'= N_1-N_2+M_2-\lfloor\frac{x'}{3}\rfloor-1,\mathrm{and}~J_2'=0 \\
  \end{array}
\right.\nonumber
\end{eqnarray}

When $x'\mod3=2$,
\begin{eqnarray}
% \nonumber to remove numbering (before each equation)
 \left\{
  \begin{array}{l}
    K_1'=2(N_2-M_2)+\lfloor\frac{2x'}{3}\rfloor=N_2-N_1+M_2-\lceil\frac{x'}{3}\rceil\\
    K_2'=0 \\
    G_1'=N_2-N_1+M_2-\lceil\frac{x'}{3}\rceil, \mathrm{and}~G_2'=1 \\
    L_1'=2(N_1-M_2)+\lfloor\frac{2x'}{3}\rfloor=N_1-N_2+M_2-\lfloor\frac{x'}{3}\rfloor\\
    L_2'=0 \\
    J_1'= N_1-N_2+M_2-\lfloor\frac{x'}{3}\rfloor,\mathrm{and}~J_2'=0 \\
  \end{array}
\right.\nonumber
\end{eqnarray}

%\begin{eqnarray}
%% \nonumber to remove numbering (before each equation)
%  \{K_1'&=& ~\mathrm{and}~K_2'=0\nonumber\\
% G_1'&=& ~\mathrm{and}~G_2'=1\nonumber\\
% L_1'&=&~\mathrm{and}~L_2'=0\nonumber\\
% J_1'&=& ~\mathrm{and}~J_2'=0\nonumber
%\end{eqnarray}

Note that since $N_1+N_2\leq 3M_2\leq 2M_1+M_2$, it can be proved that $K_1'=G_1'\leq N_2-N_1+M_2 \leq \dim(\ker{\bar{H}'})$ and $L_1'=J_1'\leq N_1-N_2+M_2\leq \dim(\ker{\bar{H}''})$. So (\ref{KK1}) and (\ref{LL1}) are both satisfied.

In addition, it can also be calculated that the constraints (\ref{N1}) and (\ref{N2}) are satisfied as well.

Finally, the achievable DoF equals $\frac{Q_{11}'+Q_{21}'+Q_{12}'+Q_{22}'}{2}=N_1+N_2-M_2+\frac{1}{2}\lfloor\frac{2x'}{3}\rfloor$. \end{proof}

Therefore, all cases have been proved to be symmetrical with the cases of $M_t\geq N_r$ scenario.

\begin{table*}
[t!]
\begin{center}\caption{Length of Message Vectors in Case $C'$ ($N_1+N_2\leq3M_2$)}
\renewcommand{\arraystretch}{2 }
\centering\label{table8}
\begin{tabular} {c |c |c |c| c}
\hline \hline
 $Q'_{11}$  & $Q'_{12}$ & $Q'_{21}$ & $Q'_{22}$ & Achievable DoF\\
 \hline
 $2(N_1-M_2)+\lfloor\frac{2x'}{3}\rfloor$   & $ 2(N_1-M_2)+\lfloor\frac{2}{3}x'\rfloor$ & $N_2+M_2-N_1-\lceil\frac{x'-1}{3}\rceil$ &$N_2+M_2-N_1-\lfloor\frac{x'}{3}\rfloor$ & $N_1+N_2-M_2+\frac{1}{2}\lfloor\frac{2x'}{3}\rfloor$\\
\hline\hline
\end{tabular}
\end{center}
\hrulefill
\end{table*}

\section{Conclusion}\label{con}

The achievable DoF of $2\times2$ MIMO X network is investigated. In the scenario of $M_t\geq N_r~(r,t=1,2)$, it is divided into three cases based on different types of antenna configurations. A practical asymmetric interference alignment and cancelation scheme was proposed that achieves outer-bound or near outer-bound DoF in each case. In addition, a thorough intuitive explanation was presented for each case to verify the result.
In the scenario of $M_t\leq N_r~(r,t=1,2)$, an interference alignment-based precoding scheme is utilized to show that the results are exactly symmetrical to the scenario of $M_t\geq N_r~(r,t=1,2)$.

%\newpage

\end{document}